\documentclass[11pt]{article}

\usepackage{amsthm}
\usepackage{amsmath}
\usepackage{amsfonts}
\usepackage{mathrsfs}
\usepackage{dsfont}

\usepackage{amssymb}
\usepackage{enumerate}
\usepackage{graphicx}
\usepackage{mathrsfs}

\usepackage{a4}
\usepackage{url}
\usepackage{algorithm}
\usepackage{algorithmicx}
\usepackage{algcompatible}
\usepackage{algpseudocode}
\usepackage{color}
\usepackage{booktabs}

\usepackage[authoryear]{natbib}

\textwidth 160 mm
\textheight 220 mm
\oddsidemargin 0cm \evensidemargin 0cm

\topmargin -1cm


\newcommand{\bea}{\begin{eqnarray*}}
\newcommand{\eea}{\end{eqnarray*}}
\newcommand{\be}{\begin{eqnarray}}
\newcommand{\ee}{\end{eqnarray}}

\def\dd{\mathrm{d}}

\def\Arg{\mathrm{Arg}}

\def\tr{\mathrm{trace}}

\def\mt{\theta}

\DeclareMathOperator{\CR}{\mathsf{CR}}

\def\Ex{\mathsf{E}}

\def\ma{\alpha}

\def\mt{\theta}

\def\ms{\sigma}

\def\Ex{\mathsf{E}}

\def\TT{^\top}
\def\e1{\mathsf{e}}

\def\Ab{\mathbf{A}}

\def\Kb{\mathbf{K}}

\def\Xb{\mathbf{X}}

\def\1b{\mathbf{1}}
\def\0b{\mathbf{0}}

\def\kb{\mathbf{k}}

\def\xb{\mathbf{x}}

\def\Yb{\mathbf{Y}}

\def\etab{\boldsymbol{\eta}}

\def\SG{\mathcal{G}}

\def\SI{\mathds{I}}

\def\SN{{\mathscr N}}
\def\SO{\mathcal{O}}

\def\SX{{\mathscr X}}

\newcommand{\thetav}{\boldsymbol{\theta}}

\newtheorem{thm}{Theorem}

\usepackage{collab}
\collabAuthor{wm}{orange!85}{Werner}
\collabAuthor{mh}{magenta!85}{Markus}
\collabAuthor{lp}{blue!85}{Luc}
\collabAuthor{hw}{green!85}{Henry}
\collabAuthor{ey}{red!85}{Elham}

\begin{document}

\title{Active Discrimination Learning for Gaussian Process Models}

\author{Elham Yousefi$^{1}$, Luc Pronzato$^{2}$, Markus Hainy$^{1}$, Werner G. Müller$^{1}$, Henry P. Wynn$^{3}$}
\date{$^{1}$JKU Linz, $^{2}$CNRS, $^{3}$London School of Economics}

\maketitle

\begin{abstract}
The paper covers the design and analysis of experiments to discriminate between two Gaussian process models, such as those widely used in computer experiments, kriging, sensor location and machine learning.
Two frameworks are considered. First, we study sequential constructions, where successive design (observation) points are selected, either as additional points to an existing design or from the beginning of observation. The selection relies on the maximisation of the difference between the symmetric Kullback Leibler divergences for the two models, which depends on the observations, or on the mean squared error of both models, which does not. Then, we consider static criteria, such as the familiar log-likelihood ratios and the Fr\'echet distance between the covariance functions of the two models. Other distance-based criteria, simpler to compute than previous ones, are also introduced, for which, considering the framework of approximate design, a necessary condition for the optimality of a design measure is provided. 
The paper includes a study of the mathematical links between different criteria and numerical illustrations are provided.
\end{abstract}

{\small {\bf Keywords:}
{model discrimination; Gaussian random field; kriging}
}

\section{Introduction}

The term `active learning' (cf.\ \cite{hino_active_2020} for a recent review) has replaced the traditional (sequential or adaptive) `design of experiments'  in the computer science literature, typically when the response is approximated by Gaussian process regression (GPR, cf. \cite{sauer_active_2022}). It refers to selecting the most suitable inputs to achieve the maximum of information from the outputs, usually with the aim of improving prediction accuracy. A good overview is given in Chapter 6 of \cite{gramacy_surrogates_2020}.

Frequently the aim of an experiment -- in the broad sense of any data acquisition exercise -- may rather be the discrimination between two or more potential explanatory models. When data can be sequentially collected during the experimental process, the literature goes back to the classic procedure of \cite{hunter_designs_1965} and has generated ongoing research (see  e.g.\ \cite{schwaab_sequential_2008}, \cite{olofsson_design_2018-1} and \cite{heirung_model_2019}).  When the design needs to be fixed before the experiment and thus no intermediate data will be available, the literature is less developed. While in the classical (non)linear regression case the criterion of T-optimality (cf.\ \cite{atkinson_design_1975}) and the numerous papers extending it was a major step, a similar breakthrough for Gaussian process regression is lacking.

With this paper we would like to investigate various sequential/adaptive and non-sequential design schemes for GPRs and their relative properties.
When the observations associated with the already collected points are available, one may base the criterion on the predictions and prediction errors (Section~\ref{S:3.1}). On the one hand, one natural choice will be to put the next design point where the symmetric Kullback-Leibler divergence between those two predictive (normal) distributions differs most. On the other hand, when the associated observations are not available, the incremental construction of the designs could be based on the mean squared error (MSE) for both models, assuming in turn that either of the two models is the true one (Section~\ref{sec:incr_uncond}).
{The static construction of a set of optimal designs of given size for nominal model parameters is the last mode we have considered (Section~\ref{S:distance-based-D}).
Our first choice is to use the difference between the expected values of the log likelihood ratios, assuming in turn that either of the two models is the true one.} This is actually a function of the symmetric Kullback-Leibler divergence, which also arises from Bayesian considerations. In a similar spirit, the Fr\'echet distance between two covariance matrices provides another natural criterion. Some further novel but simple approaches are considered in this paper as well. In particular we are interested whether complex likelihood-based criteria like the Kullback-Leibler-divergence can be effectively replaced by simpler ones based directly on the respective covariance kernels. The construction of optimal design measures for model discrimination (approximate design theory) is considered in Section~\ref{odm}.

Eventually, to compare the discriminatory power of the resulting designs from different criteria, one can compute the correct classification (hit) rates after selecting the model with the higher likelihood value.
A numerical illustration is provided in Section~\ref{S:Examples} for two Mat\'ern kernels with different smoothness.

\section{Notation}

One of the most popular design criteria for discriminating between rival models is T-optimality \citep{atkinson_design_1975}. This criterion is only applicable when the observations are independent and normally distributed with a constant variance. \cite{lopez2007optimal} generalised the normality assumption and developed an optimal discriminating design criterion to choose among non-normal models. The criterion is based on the log-likelihood ratio test under the assumption of independent observations. We denote by $\varphi_0(y,x,\theta_0)$ and $\varphi_1(y,x,\theta_1)$ the two rival probability density functions for one observation $y$ at point $x$. The following system of hypotheses might be considered:
\bea
H_0:\varphi(y,x)=\varphi_0(y,x,\theta_0)\\
H_1:\varphi(y,x)=\varphi_1(y,x,\theta_1)
\eea
where $\varphi_1(y,x,\theta_1)$ is assumed to be the true model. A common test statistic is the log-likelihood ratio given as
\bea
L=-\log\dfrac{\varphi_0(y,x,\theta_0)}{\varphi_1(y,x,\theta_1)}=\log\dfrac{\varphi_1(y,x,\theta_1)}{\varphi_0(y,x,\theta_0)},
\eea
where the null hypothesis is rejected when $\varphi_1(y,x,\theta_1)>\varphi_0(y,x,\theta_0)$ or equivalently
when $L>0$.
The power of the test refers to the expected value of the log-likelihood ratio criterion under the alternative hypothesis $H_1$. We have
\be \label{eq:PowerKLopt}
\Ex_{H_1}(L)=\Ex_{1}(L) &=&\int \varphi_1(y,x,\theta_1)\log\left\lbrace \dfrac{\varphi_1(y,x,\theta_1)}{\varphi_0(y,x,\theta_0)}\right\rbrace  \dd y\nonumber \\
&=&D_{KL}(\varphi_1\rVert \varphi_0),
\ee
where $D_{KL}(\varphi_1\rVert \varphi_0)$ is the Kullback–Leibler distance between the true and the alternative model \citep{kullback1951information}.

Interchanging the two models in the null and the alternative hypothesis, the power of the test would be
\be \label{eq:PowerKLopt2}
\Ex_{0}(-L) &=&D_{KL}(\varphi_0\rVert \varphi_1).
\ee

If it is not clear in advance which of the two models is the true model, one might consider to search for a design optimising a convex combination of \eqref{eq:PowerKLopt} and \eqref{eq:PowerKLopt2}, most commonly using weights $1/2$ for each model. This would be equivalent to maximising the symmetric Kullback-Leibler distance
\bea
D_{KL}(\varphi_0, \varphi_1) & = & \frac{1}{2} \left[ D_{KL}(\varphi_0\rVert \varphi_1) + D_{KL}(\varphi_1\rVert \varphi_0) \right].
\eea

In this paper we will consider random fields, i.e.\ we will allow for correlated observations. When the random field is Gaussian, we might still base the design strategy on the log-likelihood ratio criterion to choose among two rival models.

For a positive definite kernel $K(x,x')$ and an $n$-point design $\Xb_n=(x_1,\ldots,x_n)$, $\kb_n(x)$ is the $n$-dimensional vector $(K(x,x_1),\ldots,K(x,x_n))\TT$ and $\Kb_n$ is the $n\times n$ (kernel) matrix with elements $\{\Kb_n\}_{i,j}=K(x_i,x_j)$. Although $x$ is not bold, it may correspond to a point in a (compact) set $\SX \subset\mathds{R}^d$. Assume that $Y(x)$ corresponds to the realisation of a random field $Z_x$, indexed by $x$ in $\SX$, with zero mean $\Ex\{Z_x\}=0$ for all $x$ and covariance $\Ex\{Z_x Z_{x'}\}=K(x,x')$ for all $(x,x')\in\SX^2$. Our prediction of a future observation $Y(x)$ based on observations $\Yb_n=(Y(x_1),\ldots,Y(x_n))\TT$ corresponds to the best linear unbiased predictor (BLUP) $\widehat\eta_n(x)=\kb_n\TT(x)\Kb_n^{-1}\Yb_n$.
The associated prediction error is $e_n(x)=Y(x)-\widehat\eta_n(x)$ and we have
\bea
\Ex\{e_n^2(x)\} = \rho_n^2(x) = K(x,x)-\kb_n\TT(x)\Kb_n^{-1}\kb_n(x)\,.
\eea
The index $n$ will often be omitted when there is no ambiguity, and in that case $\kb_i(x)=\kb_{n,i}(x)$, $\Kb_i=\Kb_{n,i}$, $e_i(x)=e_{n,i}(x)$, $\rho_i^2(x)=\rho_{n,i}^2(x)$ will refer instead to model $i$, with $i\in\{0,1\}$.
We shall need to distinguish between the cases where the truth is model $0$ or model $1$,
and following \citet[p.~58]{stein_interpolation_1999} we denote by $\Ex_i$ the expectation computed with model $i$ assumed to be true.
We reserve the notation $\rho_i^2(x)$ to the case where the expectation is computed with the true model; i.e.,
\bea
\rho_i^2(x) = \Ex_i \{e_i^2(x)\} \,.
\eea
Hence we have $\rho_0^2(x) = \Ex_0\{e_0^2(x)\}= K_0(x,x)-\kb_0\TT(x)\Kb_0^{-1}\kb_0(x)$ and calculation gives
\be
\Ex_0\{e_1^2(x)\} &=& K_0(x,x)+\kb_1\TT(x)\Kb_1^{-1}\Kb_0\Kb_1^{-1}\kb_1(x)- 2\,\kb_1\TT(x)\Kb_1^{-1}\kb_0(x) \,, \label{E0e12} \\
\Ex_0\{[e_1(x)-e_0(x)]^2\} &=& \Ex_0\{e_1^2(x)\} - \Ex_0\{e_0^2(x)\} \,, \nonumber
\ee
with an obvious permutation of indices 0 and 1 when assuming the model 1 is true to compute $\Ex_1\{\cdot\}$.

If model 0 is correct, the prediction error is larger when we use model 1 for prediction than if we use the BLUP (i.e., model 0). \citet[p.~58]{stein_interpolation_1999} shows that the relation
\bea
\frac{\Ex_0\{e_1^2(x)\}}{\Ex_0\{e_0^2(x)\}} = 1+ \frac{\Ex_0\{[e_1(x)-e_0(x)]^2\}}{\Ex_0\{e_0^2(x)\}}
\eea
shown above is valid more generally for models with linear trends.
Also of interest is the assumed mean squared error (MSE) $\Ex_1\{e_1^2(x)\}$ when we use model 1 for assessing the prediction error (because we think it is correct) while the truth is model 0, and in particular the ratio
\bea
\frac{\Ex_1\{e_1^2(x)\}}{\Ex_0\{e_1^2(x)\}} = \frac{K_1(x,x)-\kb_1\TT(x)\Kb_1^{-1}\kb_1(x)}{\Ex_0\{e_1^2(x)\}} \,,
\eea
which may be larger or smaller than one.

Another important issue concerns the choice of covariance parameters in $K_0$ and $K_1$. Denote $K_i(x,x')=\ms_i^2\,C_{i,\mt_i}(x,x')$, $i=0,1$, $(x,x')\in\SX^2$, where the $\ms_i^2$ define the variance, the $\mt_i$ may correspond to correlation lengths in a translation invariant model and are thus scalar in the isotropic case, and $C(x,x')$ defines a correlation.

\section{Prediction-based discrimination}

For the incremental construction of a design for model discrimination, points are added conditionally on previous design points. We can distinguish the case where the observations associated with those previous points are available and can thus be used to construct a sequence of predictions (sequential, i.e., conditional, construction) from the unconditional case where observations are not used.

\subsection{Sequential (conditional) design}\label{S:3.1}

Consider stage $n$, where $n$ design points $\Xb_n$ and $n$ observations $\Yb_n$ are available. Assuming that the random field is Gaussian, when model $i$ is true we have $Y(x) \sim \SN(\widehat\eta_{n,i}(x),\rho_{n,i}^2(x))$. A rather natural choice is to choose the next design point $x_{n+1}$ where the symmetric Kullback-Leibler divergence between those two normal distributions differs most; that is,
\be \label{SeqCond}
x_{n+1} &\in& \Arg\max_{x\in\SX}  \frac{\rho_{n,0}^2(x)}{\rho_{n,1}^2(x)} + \frac{\rho_{n,1}^2(x)}{\rho_{n,0}^2(x)}
+ [\widehat\eta_{n,1}(x)-\widehat\eta_{n,0}(x)]^2\, \left[\frac{1}{\rho_{n,0}^2(x)}+ \frac{1}{\rho_{n,1}^2(x)} \right] \,.
\ee
Other variants could be considered as well, such as
\bea
x_{n+1} &\in& \Arg\max_{x\in\SX}\,  [\widehat\eta_{n,1}(x)-\widehat\eta_{n,0}(x)]^2 \,, \\
x_{n+1} &\in& \Arg\max_{x\in\SX}\,  \frac{[\widehat\eta_{n,1}(x)-\widehat\eta_{n,0}(x)]^2}{\rho_{n,0}^2(x)+\rho_{n,1}^2(x)} \,, \\
x_{n+1} &\in& \Arg\max_{x\in\SX}\,  [\widehat\eta_{n,1}(x)-\widehat\eta_{n,0}(x)]^2\, \left[\frac{1}{\rho_{n,0}^2(x)}+ \frac{1}{\rho_{n,1}^2(x)} \right] \,.
\eea
They will not be considered in the rest of the paper.

If necessary one can use plug-in estimates $\widehat\ms_{n,i}^2$ and $\widehat\mt_{n,i}$ of $\ms_i^2$ and $\mt_i$, for instance maximum likelihood (ML) or leave-one-out estimates based on $\Xb_n$ and $\Yb_n$, when we choose $x_{n+1}$. Note that the value of $\ms^2$ does not affect the BLUP $\widehat\eta_n(x)=\kb_n\TT\Kb_n^{-1}\Yb_n$.
In the paper we do not address the issues related to the estimation of $\ms^2$ or of the correlation length or smoothness parameters of the kernel; one may refer to \citet{karvonen_maximum_2020} 
and the recent papers \citet{Karvonen2022, KarvonenO2022}
for a detailed investigation. The connection between the notion of microergodicity, related to the consistency of the maximum-likelihood estimator, and discrimination through a KL divergence criterion is nevertheless considered in Example~1 below.

\subsection{Incremental (unconditional) design} \label{sec:incr_uncond}

Consider stage $n$, where $n$ design points $\Xb_n$ are available. We base the choice of the next point on the difference between the MSEs for both models, assuming that one or the other is true. For instance, assuming that model 0 is true, the difference between the MSEs is $\Ex_0\{e_1^2(x)\}-\Ex_0\{e_0^2(x)\}=\Ex_0\{[e_1(x)-e_0(x)]^2\}=\Ex_0 \{[\widehat\eta_{n,1}(x)-\widehat\eta_{n,0}(x)]^2\}$.

A first, un-normalised, version is thus
\be
\phi_A(x) &=& \Ex_0\{[e_1(x)-e_0(x)]^2\} + \Ex_1\{[e_1(x)-e_0(x)]^2\} \, \nonumber \\
&=& \Ex_0\{e_1^2(x)\}- \Ex_0\{e_0^2(x)\} + \Ex_1\{e_0^2(x)\}-\Ex_1\{e_1^2(x)\}\,. \label{un-normalised-incremental-1}
\ee
A normalisation seems in order here too, such as
\be\label{normalised-incremental-1}
\phi_B(x)=\frac{\Ex_0\{[e_1(x)-e_0(x)]^2\}}{\rho_{n,0}^2(x)} + \frac{\Ex_1\{[e_1(x)-e_0(x)]^2\}}{\rho_{n,1}^2(x)} = \frac{\Ex_0\{e_1^2(x)\}}{\Ex_0\{e_0^2(x)\}} + \frac{\Ex_1\{e_0^2(x)\}}{\Ex_1\{e_1^2(x)\}} - 2\,.
\ee

A third criterion is based on the variation of the symmetric Kullback-Leibler divergence \eqref{PhiKL} of Section~\ref{S:distance-based-D} when adding an $(n+1)$-th point $x$ to $\Xb_n$.
Direct calculation, using
\bea
\Kb_{n+1,i}=\left(
              \begin{array}{cc}
                \Kb_{n,i} & \kb_{n,i}(x) \\
                \kb_{n,i}\TT(x) & K_i(x,x) \\
              \end{array}
            \right)\,, \ i=0,1\,,
\eea
and the expression of the inverse of a block matrix, gives
\bea
\Phi_{KL\,[K0,K1]}(\Xb_n\cup\{x\}) = \Phi_{KL\,[K0,K1]}(\Xb_n) + \frac12\,\left[\frac{\Ex_1\{e_0^2(x)\}}{\Ex_0\{e_0^2(x)\}} + \frac{\Ex_0\{e_1^2(x)\}}{\Ex_1\{e_1^2(x)\}}\right] -1 \,.
\eea
We thus define
\be\label{phiKL}
\phi_{KL}(x)=\frac12\,\left[\frac{\Ex_1\{e_0^2(x)\}}{\Ex_0\{e_0^2(x)\}} + \frac{\Ex_0\{e_1^2(x)\}}{\Ex_1\{e_1^2(x)\}}\right] -1\,,
\ee
to be maximised with respect to $x\in\SX$.

Although the $\ms_i^2$ do not affect predictions, $\Ex_i\{e_j^2(x)\}$ is proportional to $\ms_i^2$. Unless specific information is available, it seems reasonable to assume that $\ms_0^2=\ms_1^2=1$. Other parameters $\mt_i$ should be chosen to make the two kernels the most similar, which seems easier to consider in the approach presented in Section~\ref{S:distance-based-D}, see \eqref{worst-case-Phi}. In the rest of this section we suppose that the parameters of both kernels are fixed.

The un-normalised version $\phi_A(x)$ given by \eqref{un-normalised-incremental-1} could be used to derive a one-step (non-incremental) criterion, in the same spirit as those of Section~\ref{S:distance-based-D}, through integration with respect to $x$ for a given measure $\mu$ on $\SX$. Indeed, we have
\bea
\Ex_0\{[e_1(x)-e_0(x)]^2\} = \kb_0\TT(x)\Kb_0^{-1}\kb_0(\xb)+ \kb_1\TT(x)\Kb_1^{-1}\Kb_0\Kb_1^{-1}\kb_1(x)-2\,\kb_1\TT(x)\Kb_1^{-1}\kb_0(x)\,,
\eea
so that
\bea
\int_\SX \Ex_0\{[e_1(x)-e_0(x)]^2\}\, \dd\mu(x) = \tr\left[\Kb_0^{-1}\Ab_0(\mu)+\Kb_1^{-1}\Kb_0\Kb_1^{-1}\Ab_1(\mu)-2\,\Kb_1^{-1}\Ab_{0,1}(\mu) \right] \,,
\eea
where $\Ab_i(\mu)=\int_\SX \kb_i(x)\kb_i\TT(x)\, \dd\mu(x)$, $i=0,1$, and $\Ab_{0,1}(\mu)=\int_\SX \kb_0(x)\kb_1\TT(x)\, \dd\mu(x)$. Similarly,
\bea
\int_\SX \Ex_1\{[e_1(x)-e_0(x)]^2\}\, \dd\mu(x) = \tr\left[\Kb_1^{-1}\Ab_1(\mu)+\Kb_0^{-1}\Kb_1\Kb_0^{-1}\Ab_0(\mu)-2\,\Kb_0^{-1}\Ab_{0,1}(\mu) \right] \,.
\eea
The matrices $\Ab_i(\mu)$ and $\Ab_{0,1}(\mu)$ can be calculated explicitly for some kernels and measures $\mu$. This happens in particular when $\SX=[0,1]^d$, the two kernels $K_i$ are separable, i.e., products of one-dimensional kernels on $[0,1]$, and $\mu$ is uniform on $\SX$.

\paragraph{Example~1: exponential covariance, no microergodic parameters.}

We consider Example~6 in \citet[p.~74]{stein_interpolation_1999} and take $K_i(x,x')=\e1^{-\ma_i|x-x'|}/\ma_i$, $i=0,1$. The example focuses on two difficulties: first, the two kernels only differ by their parameter values; second, the particular relation between the variance and correlation length makes the parameters $\ma_i$ not microergodic and they cannot be estimated consistently from observations on a bounded interval; see \citet[Chap.~6]{stein_interpolation_1999}. It is interesting to investigate the behaviour of the criteria \eqref{un-normalised-incremental-1}, \eqref{normalised-incremental-1} and \eqref{phiKL} in this particular situation.

We suppose that $n$ observations are made at $x_i=(i-1)/(n-1)$, $i=1,\ldots,n\geq 2$. We denote $\delta=\delta_n=1/[2(n-1)]$ the half-distance between two design points. The particular Markovian property of random processes with kernels $K_i$ simplifies the analysis. The prediction and MSE at a given $x\in(0,1)$ only depend on the position of $x$ relative to its two closest neighbouring design points; moreover, all other points have no influence. Therefore, due to the regular repartition of the $x_i$, we only need to consider the behaviour in one (any) interval $\SI_i=[a_i,b_i]=[x_i,x_{i+1}]$. 

We always have $\phi_A(x) \to 0$ as $x\to x_i\in\SX_n$. Numerical calculation shows that for $\delta_n$ small enough, $\phi_A(\cdot)$ has a unique maximum in $\SI_i$ at the centre $C_i=(x_i+x_{i+1})/2$. The next design point $x_{n+1}$ that maximises $\phi_A(\cdot)$ is then taken at $C_i$ for one of the $n-1$ intervals, and we get
\bea
\phi_A(C_i)= \frac14\, \frac{(\ma_1-\ma_0)^2(\ma_1+\ma_0)^3}{\ma_0\ma_1} \, \delta_n^4+ \SO(\delta_n^5) \,, \ n\to \infty \,.
\eea
Similar results apply to the case where the design $\Xb_n$ contains the endpoints 0 and 1 and its covering radius $\CR(\Xb_n)=\max_{x\in[0,1]}\min_{i=1,\ldots,n} |x-x_i|$ tends to zero, the points $x_i$ being not necessarily equally spaced: $C_i$ is then the centre of the largest interval $[x_i,x_{i+1}]$ and $\delta_n=\CR(\Xb_n)$.

When $\delta_n$ is large compared to the correlation lengths $1/\ma_0$ and $1/\ma_1$, there exist two maxima, symmetric with respect to $C_i$, that get closer to the extremities of $\SI_i$ as $\ma_1$ increases,
and $C_i$ corresponds to a local minimum of $\phi_A(\cdot)$. This happens for instance when $\ma_0\,\delta_n=1$ and $\ma_1\,\delta_n \gtrsim 2.600455$.

A similar behaviour is observed for $\phi_B(x)$ and $\phi_{KL}(x)$: for small enough $\delta_n$ they both have a unique maximum in $\SI_i$ at $C_i$, with now
\bea
\phi_B(C_i) &=& \frac14\, \frac{(\ma_1-\ma_0)^2(\ma_1+\ma_0)^3}{\ma_0\ma_1} \, \delta_n^3+ \SO(\delta_n^4) \,, \ n\to \infty \,,\\
\phi_{KL}(C_i) &=& \frac18\, \frac{(\ma_1-\ma_0)^2(\ma_1+\ma_0)^3}{\ma_0\ma_1} \, \delta_n^3+ \SO(\delta_n^4) \,, \ n\to \infty \,.
\eea
{Also, $\phi_B(x) \to 0$ and $\phi_{KL}(x) \to 0$ as $x\to x_i\in\Xb_n$.}
For large values of $\delta_n$ compared to the correlation lengths $1/\ma_0$ and $1/\ma_1$, there exist two maxima in $\SI_i$, symmetric with respect to $C_i$. When $\ma_0\,\delta_n=1$, this happens for instance when $\ma_1\,\delta_n \gtrsim 2.020178$ for $\phi_B(\cdot)$  and when $\ma_1\,\delta_n \gtrsim 7.251623$ for $\phi_{KL}(\cdot)$. However, in the second case the function is practically flat between the two maxima.

The left panel of Figure~\ref{F:Exl} presents $\phi_A(x)$, $\phi_B(x)$ and $\phi_{KL}(x)$ for $x\in[x_1,x_2]=[0,0.1]$ when $n=11$ ($\delta_n= 0.05$) and $\ma_0=1$, $\ma_1=10$. The right panel is for $\ma_0\,\delta_n=1$, $\ma_1\,\delta_n=10$.

\begin{figure}[ht!]
\begin{center}
\includegraphics[width=.49\linewidth]{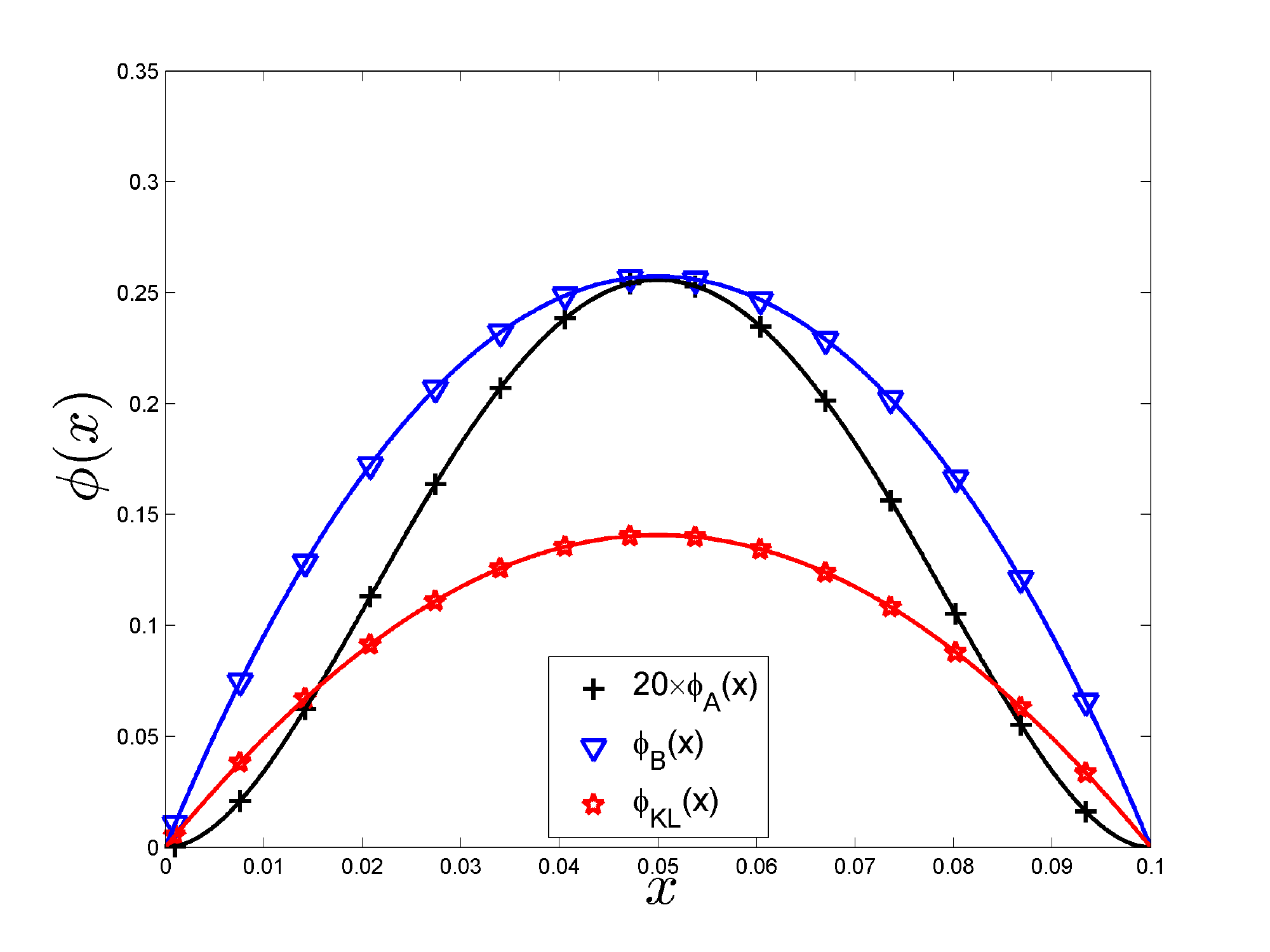} \includegraphics[width=.49\linewidth]{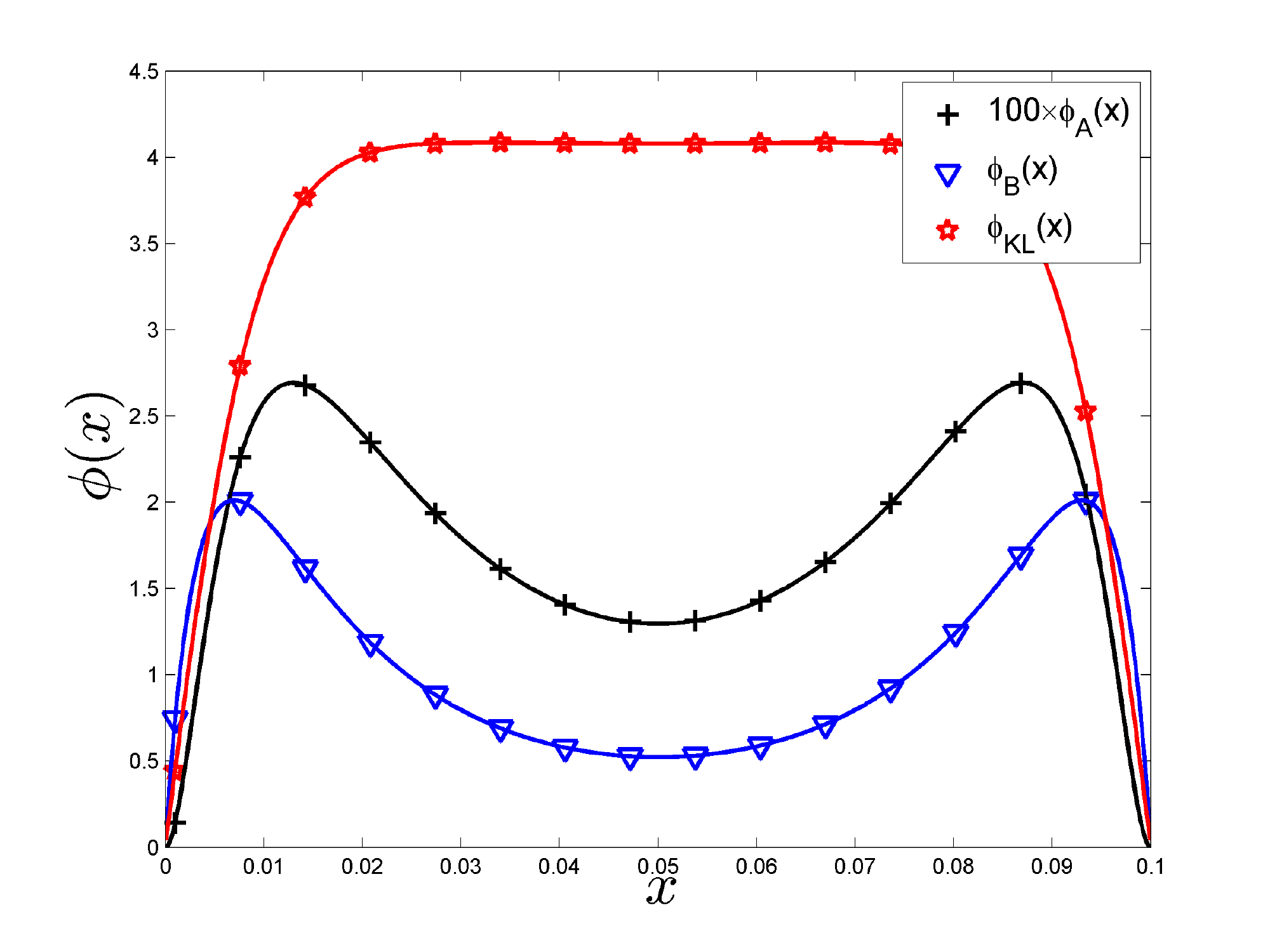}
\end{center}
\caption{\small $\phi_A(x)$, $\phi_B(x)$ and $\phi_{KL}(x)$, $x\in[x_1,x_2]$, for $n=11$ ($\delta_n= 0.05$) in Example~1. Left: $\ma_0=1$, $\ma_1=10$; Right: $\ma_0=20$, $\ma_1=200$.}\label{F:Exl}
\end{figure}

This behaviour of $\phi_{KL}(C_i)$ for small $\delta_n$ sheds light on the fact that $\ma$ is not estimable in this model. Indeed, consider a sequence of embedded $n_k$-point designs $\Xb_{n_k}$, initialised with the design $\Xb_n=\Xb_{n_0}$ considered above and with  $n_k=2^k\,(n_0-1)+1$, all these designs having the form $x_i=(i-1)/(n_k-1)$, $i=1,\ldots,n_k$. Then, $\CR(\Xb_{n_k})=\CR(\Xb_j)=\delta_j=1/[2(n_k-1)]$ for $j=n_k,\ldots,n_{k+1}-1=2\,n_k-2$. For $k$ large enough, the increase in Kullback-Leibler divergence \eqref{PhiKL} from $\Xb_{n_k}$ to $\Xb_{n_{k+1}}$ is thus bounded by $c/(n_k-1)^2$ for some $c>0$, so that the expected log-likelihood ratio $\Ex_0\{L_{n_k}\}-\Ex_1\{L_{n_k}\}$ remains bounded as $k\to\infty$.

More generally, denote by $0\leq x_1\leq x_2 \leq \cdots \leq x_n\leq 1$ the ordered points of an $n$-point design $\Xb_n$ in $[0,1]$, $n\geq 3$. Let $i^*\geq 3$ be such that $|x_{i^*-2}-x_{i^*}|=\min_{i=3,\ldots,n} |x_{i-2}-x_i|$. Then necessarily $|x_{i^*-2}-x_{i^*}|\leq 1/(\lceil n/2 \rceil-1)$. Indeed, consider the following iterative modification of $\Xb_n$ that cannot decrease $\min_{i=3,\ldots,n} |x_{i-2}-x_i|$: first, move $x_1$ to zero, then move $x_2$ to $x_1$; leave $x_3$ unchanged, but move $x_4$ to $x_3$, etc. For $n$ even, the design $\Xb_n'$ obtained is the duplication of an $(n/2)$-points design; for $n$ odd, only the right-most point $x_n$ remains single. In the fist case, the minimum distance between points of $\Xb_n'$ is at most $1/(n/2-1)$, in the second case it is at most $1/(\lceil n/2 \rceil-1)$.
We then define $\Xb_{n-1}=\Xb_n\setminus\{x_{i^*-1}\}$. For $n$ large enough, the increase in Kullback-Leibler divergence \eqref{PhiKL} from $\Xb_{n-1}$ to $\Xb_n$ is thus bounded by {$c/(\lceil n/2 \rceil-1)^3$ for some $c>0$} depending on $\ma_0$ and $\ma_1$. Starting from some design $\Xb_{n_0}$, we thus have, for $n_0$ large enough,
\bea
\Phi_{KL\,[K0,K1]}(\Xb_n)-\Phi_{KL\,[K0,K1]}(\Xb_{n_0}) \leq c \sum_{k=n_0+1}^n \frac{1}{(\lceil k/2 \rceil-1)^3} \,,
\eea
which implies $\lim_{n\to\infty} \Phi_{KL\,[K0,K1]}(\Xb_n) \leq B$ for some $B<\infty$. Assuming, without any loss of generality, that model 0 is correct, we have $0 \leq \Ex_0\{L_n\} \leq B$ (we get $0 \leq \Ex_1\{-L_n\} \leq B$ when we assume that model 1 is correct), implying in particular that $L_n$ does not tend to infinity a.s.\ and the ML estimator of $\ma$ is not strongly consistent.

\paragraph{Example~2: exponential covariance, microergodic parameters.}
{Consider now} two exponential covariance models with identical variances (which we take equal to one without any loss of generality): $K_i(x,x')=\e1^{-\ma_i|x-x'|}$, $i=0,1$.

Again, $\phi_A(x) \to 0$ as $x\to x_i\in\Xb_n$ and $\phi_A(\cdot)$ has a unique maximum at $C_i$ for small enough $\delta_n$, with now
\bea
\phi_A(C_i)= \frac12\, (\ma_1^2-\ma_0^2)^2 \, \delta_n^4+ \SO(\delta_n^5) \,, \ n\to \infty \,.
\eea
There are two maxima for $\phi_A(\cdot)$ in $\SI_i$, symmetric with respect to $C_i$ for large $\delta_n$: when $\ma_0\,\delta_n=1$, this happens for instance when $\ma_1\,\delta_n \gtrsim 2.558545$. Nothing is changed for $\phi_B(\cdot)$ compared to Example~1 as the variances cancel in the ratios that define $\phi_B(\cdot)$, see \eqref{E0e12} and \eqref{normalised-incremental-1}. The situation is quite different for $\phi_{KL}(\cdot)$, with
\bea
\phi_{KL}(C_i)= \frac12\, \frac{(\ma_1-\ma_0)^2}{\ma_0\ma_1} + \SO(\delta_n) \,, \ n\to \infty \,,
\eea
indicating that it is indeed possible to distinguish between the two models much more efficiently with this criterion than with the two others. Interestingly enough, the best choice for next design point is not at $C_i$ but always as close as possible to one of the endpoints $a_i$ or $b_i$, with however a criterion value similar to that in the centre $C_i$ when $\delta_n$ is small enough, as $\lim_{x\to x_i} \phi_{KL}(x) = (\ma_1-\ma_0)^2/(2\,\ma_0\ma_1)$.
Here, the same sequence of embedded designs as in Example~1 ensures that $\Ex_0\{L_{n_k}\}-\Ex_1\{L_{n_k}\} \to \infty$ as $k\to\infty$. Figure~\ref{F:Ex2} presents $\phi_A(x)$, $\phi_B(x)$ and $\phi_{KL}(x)$ in the same configuration as in Figure~\ref{F:Exl} but for the kernels $K_i(x,x')=\e1^{-\ma_i|x-x'|}$, $i=0,1$.

\begin{figure}[ht!]
\begin{center}
\includegraphics[width=.49\linewidth]{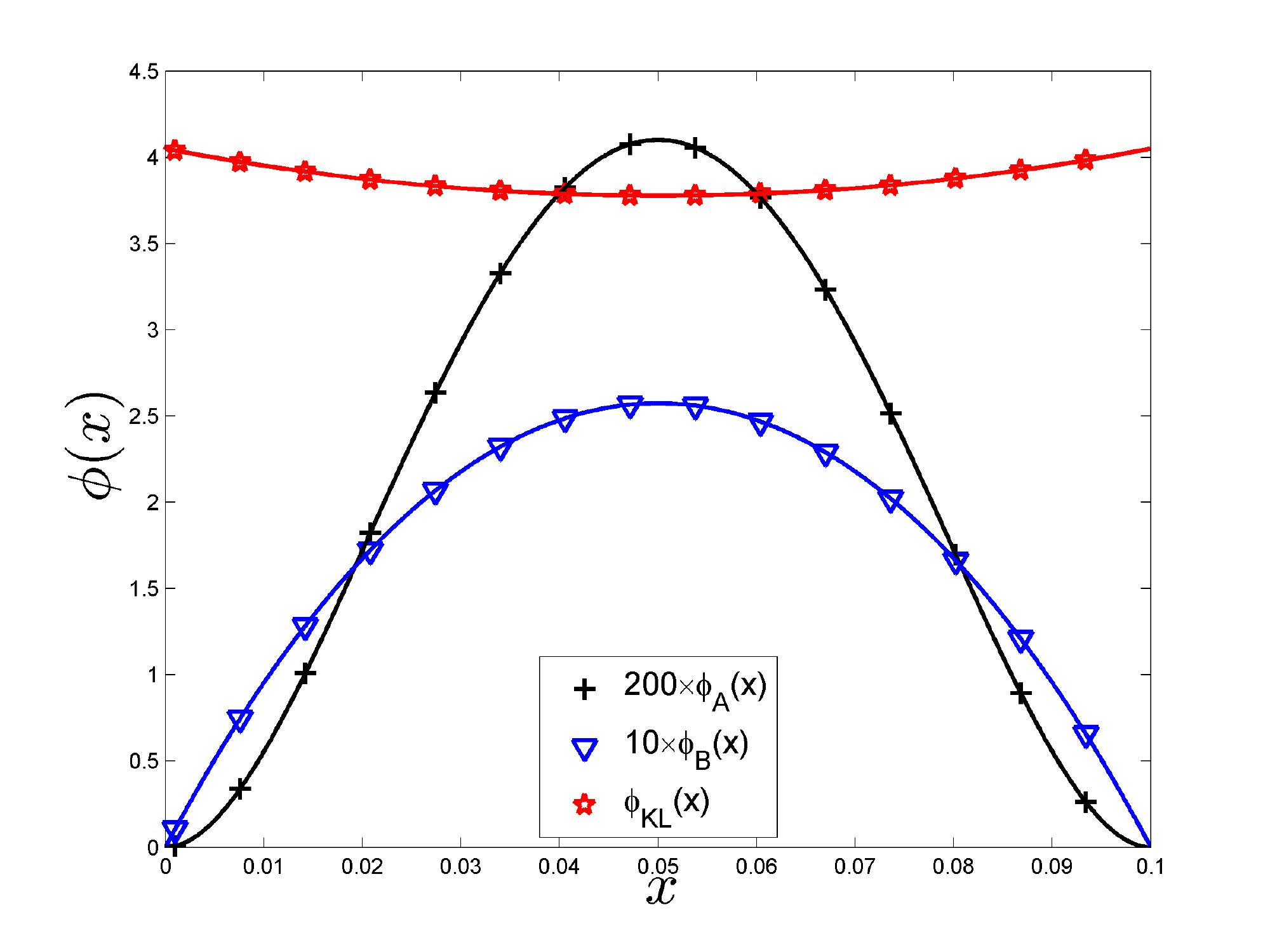} \includegraphics[width=.49\linewidth]{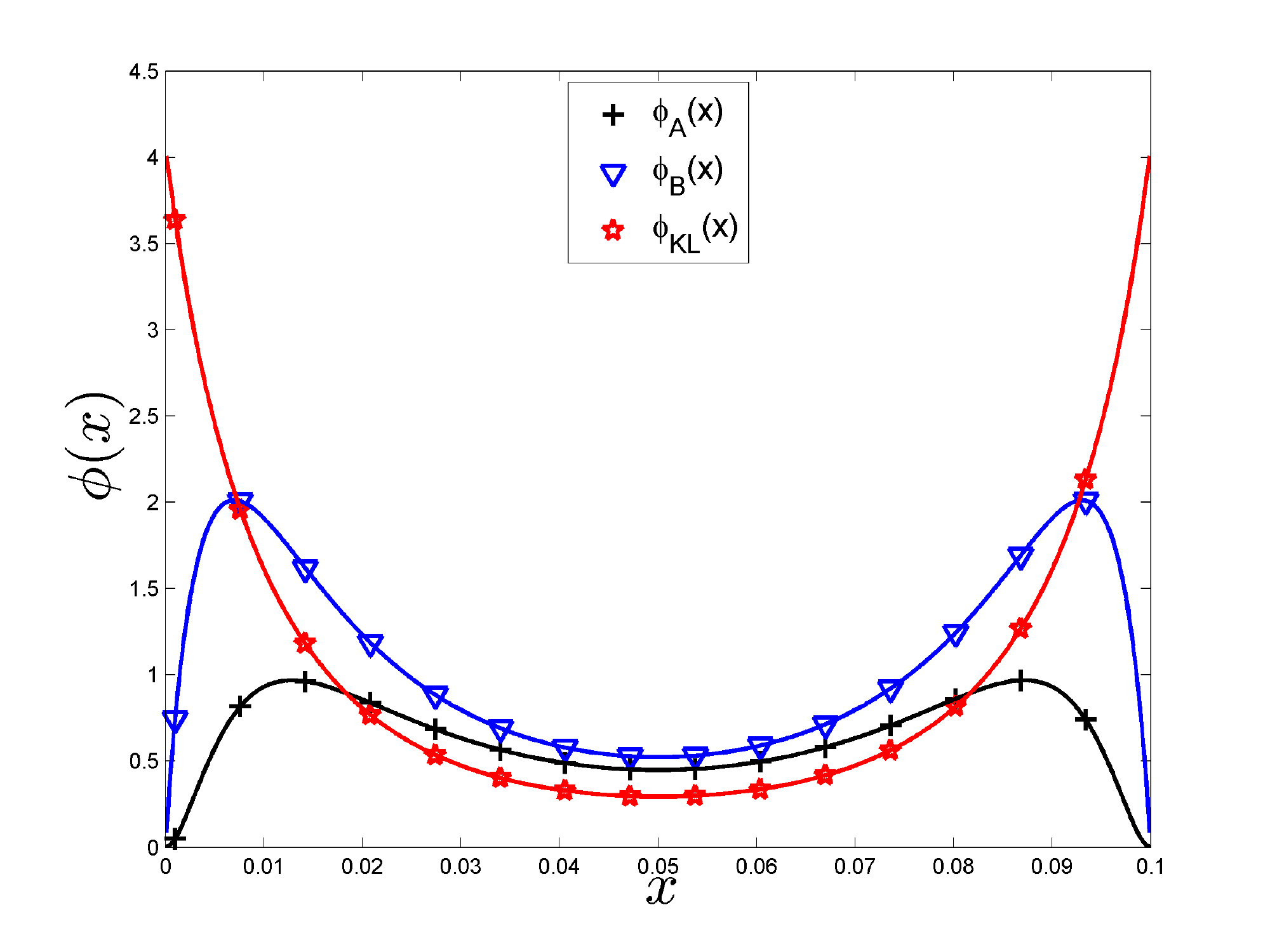}
\end{center}
\caption{\small $\phi_A(x)$, $\phi_B(x)$ and $\phi_{KL}(x)$, $x\in[x_1,x_2]$, for $n=11$ ($\delta_n= 0.05$) in Example~2. Left: $\ma_0=1$, $\ma_1=10$; Right: $\ma_0=20$, $\ma_1=200$.}\label{F:Ex2}
\end{figure}

\paragraph{Example~3: Mat\'ern kernels.}

Take $K_0$ and $K_1$ as the 3/2 and 5/2 Mat\'ern kernels, respectively:
\be
K_{0,\mt}(x,x') &=& (1+\sqrt{3}\mt\,|x-x'|)\, \exp(-\sqrt{3}\mt\,|x-x'|) \ \mbox{ (Mat\'ern 3/2)}\,, \label{Matern32}\\
K_{1,\mt}(x,x') &=& [1+\sqrt{5}\mt\, |x-x'|+ 5\mt^2\, |x-x'|^2/3]\,\exp(-\sqrt{5}\mt\, |x-x'|)  \ \mbox{ (Mat\'ern 5/2)}\,. \label{Matern52}
\ee
We take $\mt=\mt_0=1$ in $K_{0,\mt}$ and adjust $\mt=\mt_1$ in $K_{1,\mt}$ to minimise $\phi_{2\,[K_{0,\mt_0},K_{1,\mt_1}]}(\mu)$ defined by Eq.~\eqref{phi-psi} in Section~\ref{S:distance-based-D} with $\mu$ the uniform measure on $[0,1]$, which gives $\mt_1\simeq 1.1275$. The left panel of Figure~\ref{F:Ex3} shows
$K_{0,\mt_0=1}(x,0)$ and $K_{1,\mt}(x,0)$ for $\mt=1$ and $\mt=\mt_1$ when $x\in[0,1]$. The right panel presents $\phi_B(x)$ and $\phi_{KL}(x)$ for the same $n=11$-point equally spaced design $\Xb_n$ as in Example~1 and $x\in[0,1]$ for $K_{0,1}$ and $K_{1,1.1275}$ (the value of $\phi_A(x)$ does not exceed $0.65\,10^{-4}$ and is not shown). The behaviours of $\phi_B(x)$ and $\phi_{KL}(x)$ are now different in different intervals $[x_i,x_{i+1}]$ (they remain symmetric with respect to $1/2$, however), the maximum of $\phi_{KL}(x)$ is obtained at the central point $x_5$. The behaviour of $\phi_{KL}(\cdot)$ could be related to the fact that discriminating between $K_0$ {and $K_1$} amounts to estimating the smoothness of the realisation, which requires that some design points are close to each other.

\begin{figure}[ht!]
\begin{center}
\includegraphics[width=.49\linewidth]{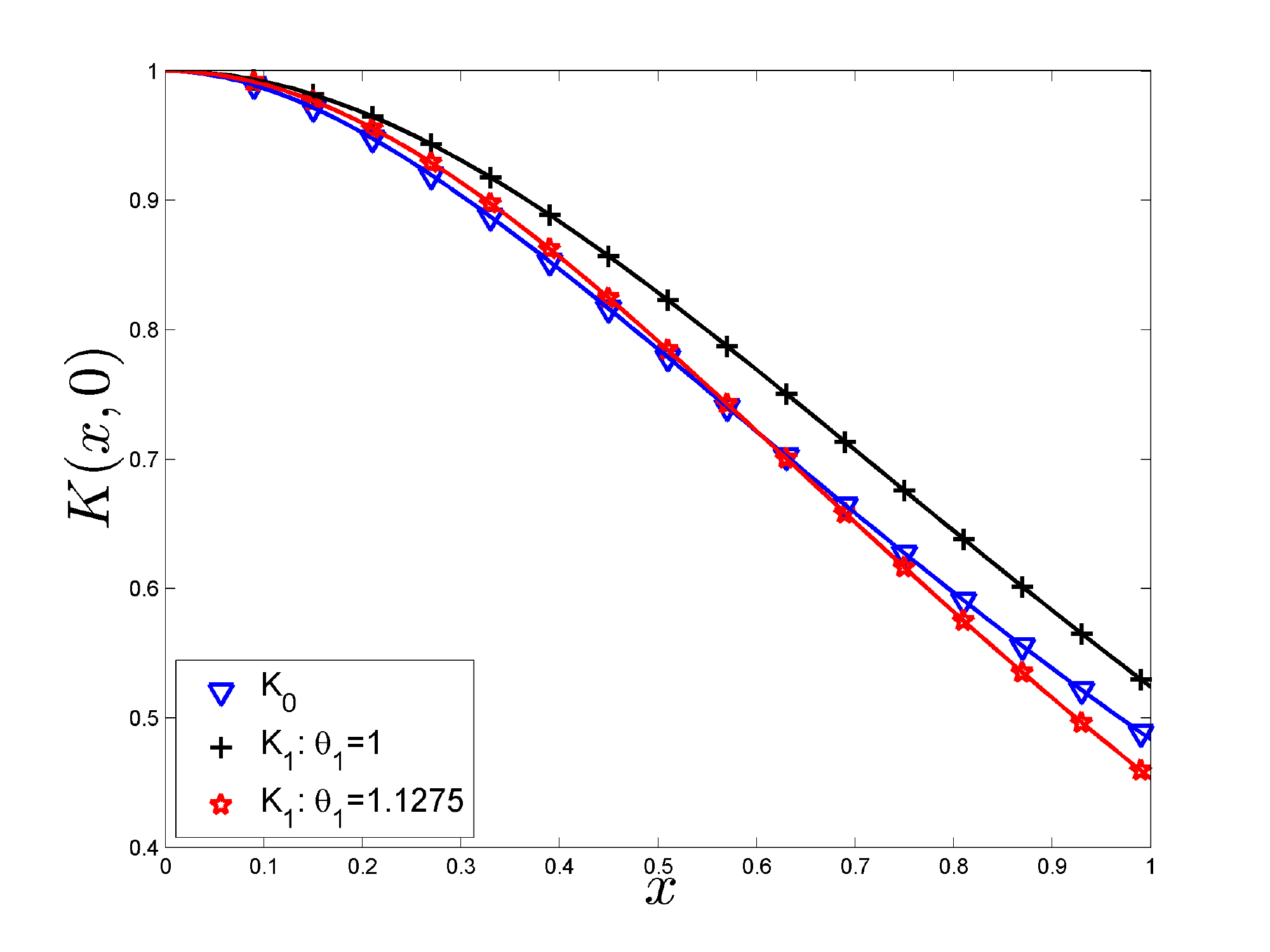} \includegraphics[width=.49\linewidth]{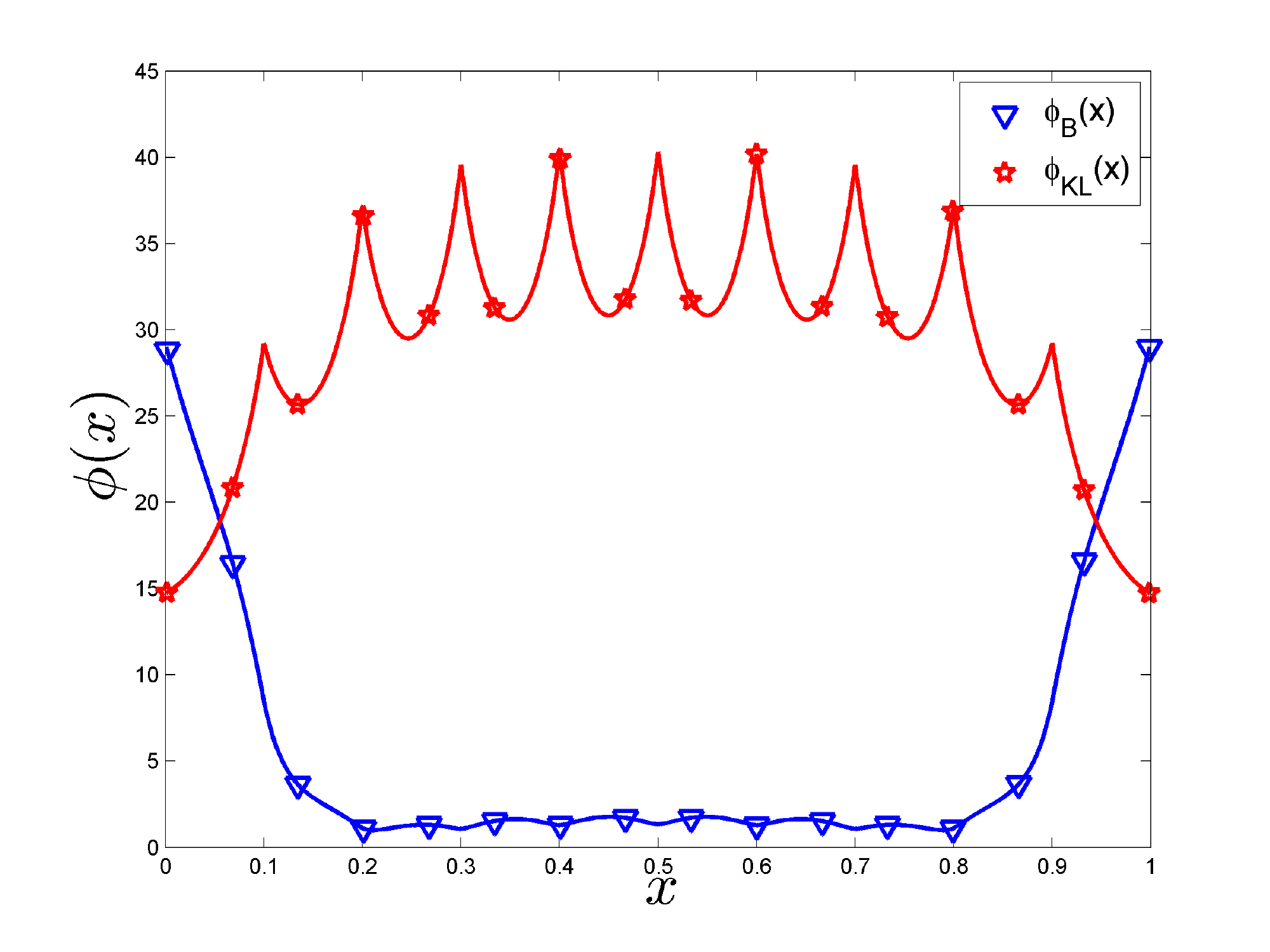}
\end{center}
\caption{\small Left: $K_{0,1}(x,0)$, $K_{1,1}(x,0)$ and $K_{1,1.1275}(x,0)$, $x\in[0,1]$. Right: $\phi_B(x)$ and $\phi_{KL}(x)$ for $x\in[0,1]$ and the same $11$-point equally spaced design {$\Xb_n=\{0,1/10,2/10,\ldots,1\}$} as in Example~1, with $K_{0,1}$ and $K_{1,1.1275}$.}\label{F:Ex3}
\end{figure}

\section{Distance-based discrimination}\label{S:distance-based-D}

We will now consider criteria which are directly based on the discrepancies of the covariance kernels. Ideally those should be simpler to compute and still exhibit reasonable efficiencies and some similar properties. The starting point is again the use of the log-likelihood ratio criterion to choose among the two models. Assuming that the random field is Gaussian, the probability densities of observations $\Yb_n$ for the two models are
\bea
\varphi_{n,i}(\Yb_n) = \frac{1}{(2\pi)^{n/2}\,\det^{1/2} \Kb_{n,i}} \, \exp\left[ - \frac12\, \Yb_n\TT\Kb_{n,i}^{-1} \Yb_n \right]\,, \ i=0,1\,.
\eea
The expected value of the log-likelihood ratio $L_n=\log \varphi(\Yb_n|0)- \log\varphi(\Yb_n|1)$  
under model 0 is
\bea
\Ex_0\{L_n\} = \frac12 \log\det(\Kb_{n,1}\Kb_{n,0}^{-1}) - \frac{n}{2} +\frac12\, \tr(\Kb_{n,0}\Kb_{n,1}^{-1})
\eea
and similarly
\bea
\Ex_1\{L_n\} = \frac12 \log\det(\Kb_{n,1}\Kb_{n,0}^{-1}) + \frac{n}{2} - \frac12\, \tr(\Kb_{n,1}\Kb_{n,0}^{-1}) \,.
\eea
A good discriminating design should make the difference $\Ex_0\{L_n\}-\Ex_1\{L_n\}$ as large as possible; that is, we should choose $\Xb_n$ that maximises
\be
\Phi_{KL\,[K_0,K_1]}(\Xb_n)= \Ex_0\{L_n\}-\Ex_1\{L_n\} &=& \frac12 \left[\tr(\Kb_{n,0}\Kb_{n,1}^{-1}) + \tr(\Kb_{n,1}\Kb_{n,0}^{-1}) \right] - n \, \nonumber \\
&=&  2\, D_{KL}(\varphi_{n,0},\varphi_{n,1})\,, \label{PhiKL}
\ee
i.e.\ twice the symmetric Kullback-Leibler divergence between the normal distributions with densities $\varphi_{n,0}$ and $\varphi_{n,1}$.

We may enforce the normalisation $\ms_0^2=\ms_1^2=1$ and choose the $\mt_i$ to make the two kernels most similar in the sense of the criterion $\Phi(\cdot,\cdot)$ considered; that is, maximise
\be \label{worst-case-Phi}
\min_{\mt_0\in\Theta_0,\,\mt_1\in\Theta_1}\Phi_{KL\,[K_0,K_1]}(\Xb_n) \,.
\ee
The choice of $\Theta_0$ and $\Theta_1$ is important; in particular, unconstrained minimisation over the $\mt_i$ could make both kernels completely flat or on the opposite close to Dirac distributions. It may thus be preferable to fix $\mt_0$ and minimise over $\mt_1$ without constraints. Also, the
Kullback-Leibler distance is sensitive to kernel matrices being near singularity, which might happen if design points are very close to each other. \cite{pronzato_bregman_2019} suggest a family of criteria based on matrix distances derived from Bregman divergences between functions of covariance matrices from Kiefer's $\varphi_p$-class of functions \citep{kiefer_equivalence_1974}. If $p \in (0,1)$, these criteria are rather insensitive to eigenvalues close or equal to zero. Alternatively, they suggest criteria computed as Bregman divergences between squared volumes of random $k$-dimensional simplices for $k \in \{2,\ldots,d-1\}$, which have similar properties.

The index $n$ is omitted in the following and we consider fixed parameters for both kernels. The Fr\'echet-distance criterion
\be\label{Phi-Frechet}
\Phi_{F\,[K_0,K_1]}(\Xb_n) = \tr\left[ \Kb_0+\Kb_1-2\,(\Kb_0\Kb_1)^{1/2}\right],
\ee
related to the Kantorovich (Wasserstein) distance, seems of particular interest due to the absence of matrix inversion. The expression is puzzling since the two matrices do not necessarily commute, but the paper \cite{dowson_frechet_1982} is illuminating.

Other matrix ``entry-wise" distances will be considered, in particular the one based on the (squared) Frobenius norm,
\bea
\Phi_{2\,[K_0,K_1]}(\Xb_n) = \tr\left( \Kb_0^2+\Kb_1^2-2\,\Kb_0\Kb_1\right) = \tr\left[(\Kb_0-\Kb_1)^2\right],
\eea
which corresponds to the substitution of $\Kb_i^2$ for $\Kb_i$ in \eqref{Phi-Frechet} for $i=0,1$.
Denote more generally
\bea
\Phi_{p\,[K_0,K_1]}(\Xb_n) = \|\Kb_1-\Kb_0\|_p^p = \sum_{i,j=1}^n |\{\Kb_1-\Kb_0\}_{i,j}|^p = \1b_n\TT |\Kb_1-\Kb_0|^{\odot p} \1b_n \,, \ p>0 \,,
\eea
where $\1b_n$ is the $n$-dimensional vector with all components equal to 1, the absolute value is applied entry-wise and $^{\odot p}$ denotes power $p$ applied entry-wise.

Figure~\ref{F:Ex4} shows the values of the criteria $\Phi_{i\,[K_{0,1},K_{1,\mt}]}$, $i=1,2$, $\Phi_{F\,[K_{0,1},K_{1,\mt}]}$ and $\Phi_{KL\,[K_{0,1},K_{1,\mt}]}$ as functions of $\mt$ for the two kernels $K_{0,\mt}$ and $K_{1,\mt}$ given by \eqref{Matern32} and \eqref{Matern52} and the same {regular design} as in Example~1: $x_i=(i-1)/(n-1)$, $i=1,\ldots,11$. The criteria are re-scaled so that their maximum equals one on the interval considered for $\mt$.  Note the similarity between $\Phi_{2\,[K_{0,1},K_{1,\mt}]}(\Xb_n)$ and $\Phi_{F\,[K_{0,1},K_{1,\mt}]}(\Xb_n)$ and the closeness between the distance-minimising $\mt$ for $\Phi_1$, $\Phi_2$ and $\Phi_F$. Also note the good agreement  with the value $\mt_1\simeq 1.1275$ that minimises $\phi_{2\,[K_{0,1},K_{1,\mt_1}]}(\mu)$ from Eq.~\eqref{phi-psi}, see Example~3. The optimal $\mt$ for $\Phi_{KL\,[K_{0,1},K_{1,\mt}]}(\Xb_n)$ is much different, however, showing that the criteria do not necessarily agree between them.

\begin{figure}[ht!]
\begin{center}
\includegraphics[width=.49\linewidth]{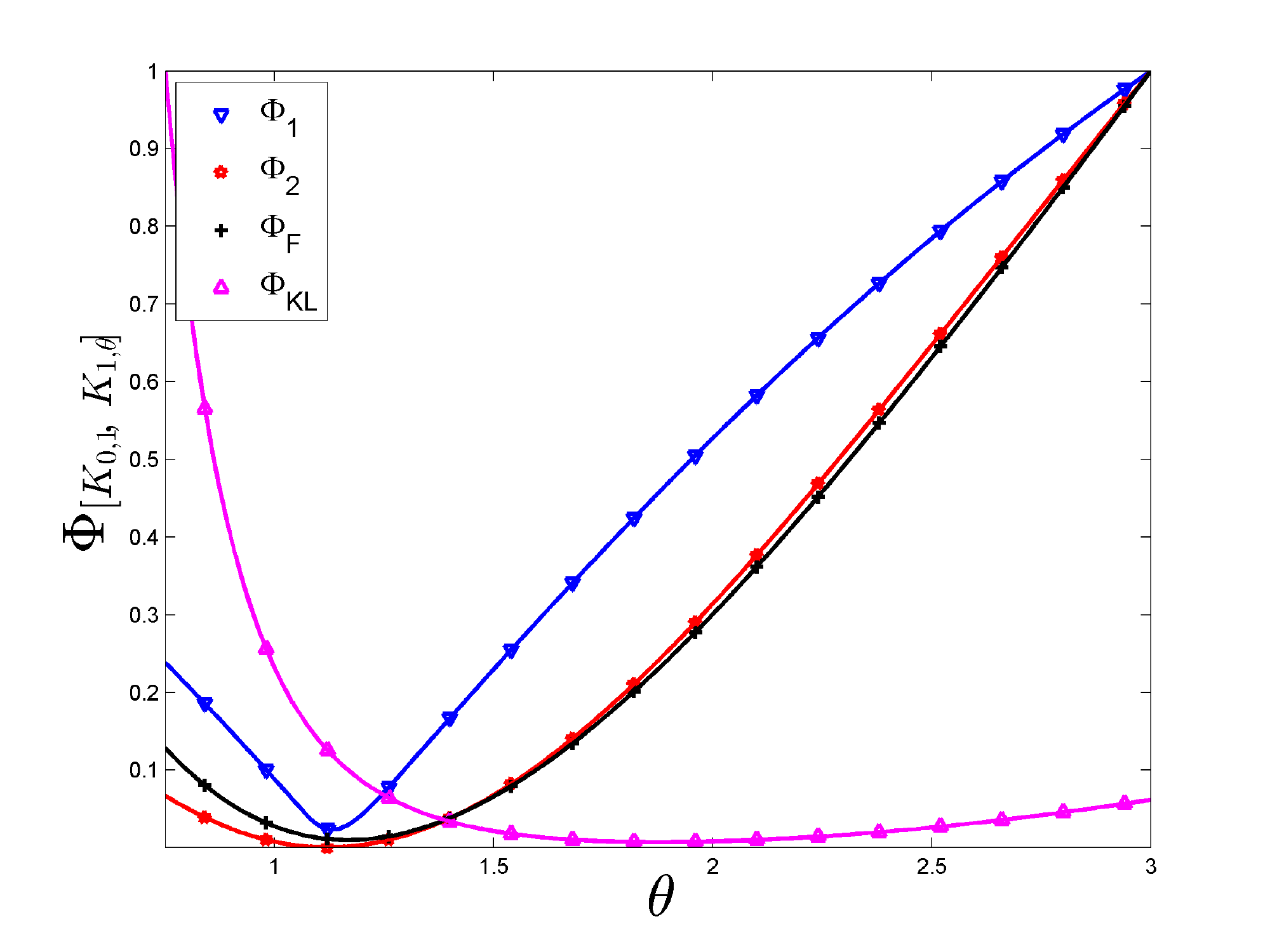} \end{center}
\caption{\small $\Phi_{i\,[K_{0,1},K_{1,\mt}]}(\Xb_n)$, $i=1,2$, $\Phi_{F\,[K_{0,1},K_{1,\mt}]}(\Xb_n)$ and $\Phi_{KL\,[K_{0,1},K_{1,\mt}]}(\Xb_n)$ as functions of $\mt\in[0.75,3]$ for the same $11$-point equally spaced design $\Xb_n$ as in Example~1 and $K_{0,\mt}$, $K_{1,\mt}$ given by \eqref{Matern32} and \eqref{Matern52}, respectively.}\label{F:Ex4}
\end{figure}

An interesting feature of the family of criteria $\Phi_{p\,[K_0,K_1]}(\cdot)$, $p>0$, is that they extend straightforwardly to a design measure version. Indeed, defining $\xi_n$ as the empirical measure on the points in $\Xb_n$, $\xi_n=(1/n)\,\sum_{i=1}^n \delta_{x_i}$, we can write
\bea
\Phi_{p\,[K_0,K_1]}(\Xb_n) = n^2\, \phi_{p\,[K_0,K_1]}(\xi_n) \,,
\eea
where we define, for any design (probability) measure on $\SX$,
{
\be
\phi_p(\xi) = \phi_{p\,[K_0,K_1]}(\xi) &=& \int_{\SX^2} |K_1(x,x')-K_0(x,x')|^p \, \dd\xi(x)\dd\xi(x') \,.\label{phi-psi}
\ee
}

Denote by $F_{p\,[K_0,K_1]}(\xi;\nu)$ the directional derivative of $\phi_{p\,[K_0,K_1]}(\cdot)$ at $\xi$ in the direction $\nu$,
\bea
F_{p\,[K_0,K_1]}(\xi;\nu) = \lim_{\ma\ra 0^+} \frac{\phi_{p\,[K_0,K_1]}[(1-\ma)\xi+\ma\nu]-\phi_{p\,[K_0,K_1]}(\xi)}{\ma} \,.
\eea
Direct calculation gives
\bea
F_{p\,[K_0,K_1]}(\xi;\nu) = 2 \left[ \int_{\SX^2} |K_1(x,x')-K_0(x,x')|^p\, \dd\nu(x)\dd\xi(x') - \phi_{p\,[K_0,K_1]}(\xi) \right] \,,
\eea
and thus in particular
\bea
F_{p\,[K_0,K_1]}(\xi;\delta_x) = 2 \left[ \int_\SX |K_1(x,x')-K_0(x,x')|^p\, \dd\xi(x') - \phi_{p\,[K_0,K_1]}(\xi) \right] \,.
\eea
One can easily check that the criterion is neither concave nor convex in general (as the matrix $|\Kb_1-\Kb_0|^{\odot p}$ can have both positive and negative eigenvalues), but we nevertheless have a necessary condition for optimality.

\begin{thm}\label{Th:NC} If the probability measure $\xi^*$ on $\SX$ maximises $\phi_{p\,[K_0,K_1]}(\xi)$, then
\bea
\forall x\in\SX, \  \int_\SX |K_1(x,x')-K_0(x,x')|^p\, \dd\xi^*(x') \leq \phi_{p\,[K_0,K_1]}(\xi^*) \,.
\eea
Moreover, $\int_\SX |K_1(x,x')-K_0(x,x')|^p\, \dd\xi^*(x')=\phi_{p\,[K_0,K_1]}(\xi^*)$ for $\xi^*$-almost every $x\in\SX$. \label{Theorem:necessaryCon_Phip}
\end{thm}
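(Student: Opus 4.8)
The plan is to apply the standard first-order necessary-condition argument of approximate design theory, taking full advantage of the fact that the directional derivative $F_{p\,[K_0,K_1]}(\xi;\nu)$ has already been computed just above the statement. The essential structural fact is that the set of probability measures on $\SX$ is convex: for any competing measure $\nu$ the perturbation $(1-\ma)\xi^*+\ma\nu$ is again a probability measure for every $\ma\in[0,1]$. Since $\xi^*$ maximises $\phi_p$, each difference quotient $[\phi_p((1-\ma)\xi^*+\ma\nu)-\phi_p(\xi^*)]/\ma$ has a non-positive numerator and a positive denominator, hence is non-positive; letting $\ma\to 0^+$ gives $F_{p\,[K_0,K_1]}(\xi^*;\nu)\leq 0$ for every direction $\nu$.

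First I would specialise this to the Dirac direction $\nu=\delta_x$. Inserting the expression
\[
F_{p\,[K_0,K_1]}(\xi^*;\delta_x) = 2\left[ \int_\SX |K_1(x,x')-K_0(x,x')|^p\, \dd\xi^*(x') - \phi_p(\xi^*) \right]
\]
supplied in the excerpt, the inequality $F_{p\,[K_0,K_1]}(\xi^*;\delta_x)\leq 0$ rearranges at once into the claimed pointwise bound: writing $d(x):=\int_\SX |K_1(x,x')-K_0(x,x')|^p\, \dd\xi^*(x')$, we obtain $d(x)\leq\phi_p(\xi^*)$ for all $x\in\SX$.

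For the almost-everywhere saturation I would integrate this bound against $\xi^*$ itself. By Fubini (the integrand is non-negative, so no integrability hypothesis is needed) the $\xi^*$-average of $d$ reproduces the criterion value, $\int_\SX d(x)\,\dd\xi^*(x)=\phi_p(\xi^*)$. Therefore
\[
\int_\SX \bigl[\phi_p(\xi^*)-d(x)\bigr]\,\dd\xi^*(x) = 0,
\]
an integral of a non-negative function against $\xi^*$; it follows that $\phi_p(\xi^*)-d(x)=0$, i.e.\ $d(x)=\phi_p(\xi^*)$, for $\xi^*$-almost every $x$.

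There is no genuinely hard step here; the whole content sits in the convexity of the feasible set and in the directional derivative already in hand. The only points deserving a line of care are (i) the passage to the limit in the difference quotient, which is exactly where maximality of $\xi^*$ is used, and (ii) the measure-theoretic averaging that upgrades the global inequality to $\xi^*$-a.e.\ equality. It is worth emphasising that concavity of $\phi_p$ is neither available nor needed — indeed it fails in general, as noted just before the statement — which is precisely why the condition obtained is necessary but not sufficient.
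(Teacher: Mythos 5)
Your proof is correct and follows exactly the route the paper intends: the paper states Theorem~\ref{Th:NC} without a displayed proof, relying on the directional derivative $F_{p\,[K_0,K_1]}(\xi;\delta_x)$ computed immediately beforehand, and your argument (non-positivity of the directional derivative at a maximiser by convexity of the set of probability measures, followed by integrating the resulting inequality against $\xi^*$ to force $\xi^*$-a.e.\ equality) is precisely that standard argument, carried out with appropriate care at the two delicate points you flag.
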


This suggests the following simple incremental construction: at iteration $n$, with $\Xb_n$ the current design and $\xi_n$ the associated empirical measure, choose $x_{n+1}\in\Arg\max_{x\in\SX} F_{p\,[K_0,K_1]}(\xi_n;\delta_x)=\1b_n\TT|\kb_{n,0}(x)-\kb_{n,1}(x)|^{\odot p}$. It will be used in the numerical example of Section~\ref{S:6.1.1}.

\section{Optimal design measures}\label{odm}

In this section we explain why the determination of optimal design measures maximising $\phi_p(\xi)$ is generally difficult, even when limiting ourselves to the satisfaction of the necessary condition in Theorem~\ref{Th:NC}. At the same time, we can characterise measures that are approximately optimal for large $p$. 

We assume that the two kernels { are isotropic, i.e., such that $K_i(x,x')=\Psi_i(\|x-x'\|)$, $i=0,1$, and that the functions $\Psi_i$ are differentiable except possibly at 0 where they only admit a right derivative. We define {$\psi(t)=|\Psi_1(t)-\Psi_0(t)|$,} $t\in\mathds{R}^+$, and assume that the kernels} have been normalised so that $K_0(x,x)=K_1(x,x)$; that is, $\psi(0)=0$. Also, we only consider the case where the function $\psi(\cdot)$ has a unique global maximum on $\mathds{R}^+$. This assumption is not very restrictive. Consider again the two Mat\'ern kernels (\ref{Matern32}) and (\ref{Matern52}).
Figure~\ref{F:K32-K52} shows the evolution of $\psi^2(t)$ for $K_0=K_{0,1}$ and $K_1=K_{1,\mt_1}$ with two different values of $\mt_1$: $\mt_1=1$ and $\mt_1\simeq 1.1275$; the latter minimises $\phi_{2\,[K_{0,1},K_{1,\mt}]}(\mu)$ for $\mu$ being the uniform measure on $[0,1]$.

\begin{figure}[ht!]
\begin{center}
\includegraphics[width=.49\linewidth]{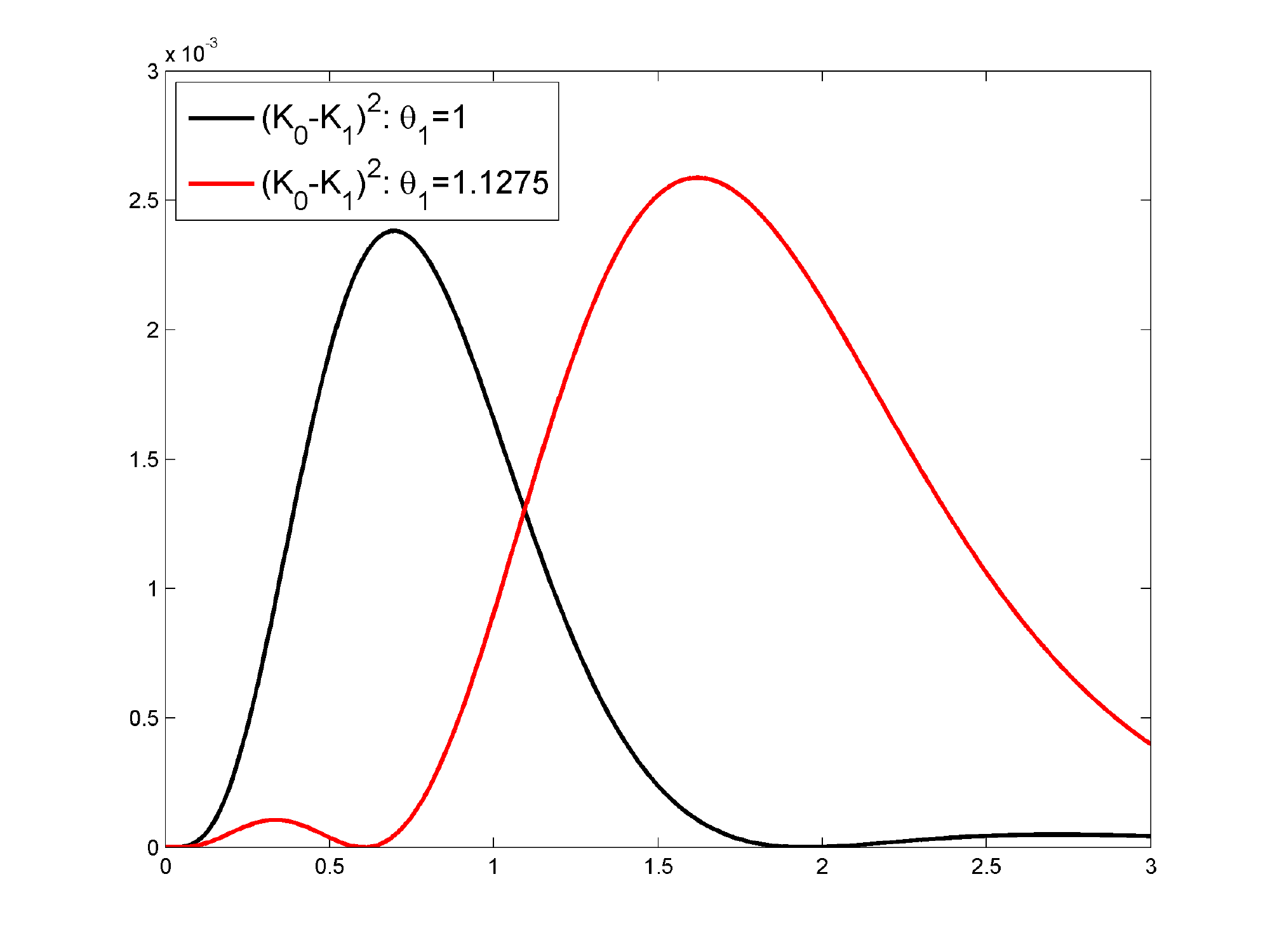}
\end{center}
\caption{\small $\psi^2(t)$ for $K_0=K_{0,1}$ and $K_1=K_{1,\mt_1}$ with two different values of $\mt_1$.}\label{F:K32-K52}
\end{figure}

In the following, we shall consider normalised functions $\psi(\cdot)$, such that $\max_{t\in\mathds{R}^+} \psi(t)=1$.
We denote by $\Delta$ the (unique) value such that $\psi(\Delta)=1$. On Figure~\ref{F:K32-K52}, $\Delta\simeq 0.7$ when $K_1=K_{1,1}$.

\subsection{A simplified problem with an explicit optimal solution}\label{S:simplified}

Consider the extreme case where $\psi=\psi_*$ defined by
\be
\psi_*(t) = \left\{ \begin{array}{ll}
1 & \mbox{if } t=\Delta, \\
0 & \mbox{otherwise.}
\end{array}\right. \label{psi_*}
\ee
Note that $\psi_*^p(t)=\psi_*(t)$ for any $p>0$; we can thus restrict our attention to $p=1$ for the maximisation of $\phi_p(\xi)$ {defined by \eqref{phi-psi}; that is, we consider
\bea
\phi_1(\xi) = \int_{\SX^2} \psi_*(\|x-x'\|) \, \dd\xi(x)\dd\xi(x') \,.
\eea
}

\begin{thm}\label{Th:optimal-simplified}
When $\psi=\psi_*$ and $\SX\subset\mathds{R}^d$ is large enough to contain a regular $d$ simplex with edge length $\Delta$, any measure $\xi^*$ allocating weight $1/(d+1)$ at each vertex of such a simplex maximises $\phi_1(\xi)$, and $\phi_1(\xi^*)=d/(d+1)$.
\end{thm}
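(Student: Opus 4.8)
The plan is to prove the two matching bounds $\phi_1(\xi^*)=d/(d+1)$ and $\phi_1(\xi)\le d/(d+1)$ for every probability measure $\xi$ on $\SX$. The first is a direct evaluation: the $d+1$ vertices of a regular simplex of edge length $\Delta$ are pairwise at distance exactly $\Delta$, and $\psi_*(0)=0$ kills the diagonal, so only the $(d+1)d$ ordered off-diagonal pairs contribute to $\phi_1(\xi^*)$, each weighted by $1/(d+1)^2$; hence $\phi_1(\xi^*)=(d+1)d/(d+1)^2=d/(d+1)$. The hypothesis that $\SX$ contains such a simplex makes $\xi^*$ admissible, so it remains to establish the upper bound.

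Write $k=\psi_*(\|\cdot\|)$, so that $k\in\{0,1\}$ and $k(x,x')=1$ iff $x,x'$ lie at distance $\Delta$; the sets on which $k$ is identically $1$ off the diagonal are exactly the equilateral point sets. The governing geometric fact is that an equilateral set in $\mathds{R}^d$ has at most $d+1$ points: if $p_0,\dots,p_m$ are pairwise at distance $\Delta$, the Gram matrix of the differences $p_i-p_0$ equals $\tfrac{\Delta^2}{2}(\Ib+\Jb)$ with $\Jb$ the all-ones matrix, which is positive definite of rank $m$, forcing $m\le d$. On such a set, carrying masses $w_i$ on $m\le d+1$ points, $\phi_1$ reduces to $(\sum_i w_i)^2-\sum_i w_i^2=1-\sum_i w_i^2\le 1-1/m\le d/(d+1)$ by Cauchy--Schwarz. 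The whole task is therefore to reduce an arbitrary maximiser to one supported on an equilateral set.

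I would carry out that reduction in three moves (the Motzkin--Straus mechanism, adapted to measures). (i) \emph{Existence}: $\psi_*$ is upper semicontinuous and $\|\cdot\|$ continuous, so $k$ is bounded and upper semicontinuous; by the portmanteau theorem $\xi\mapsto\phi_1(\xi)=\iint k\,d\xi\,d\xi$ is upper semicontinuous on the weak-$*$-compact set of probability measures on the compact $\SX$, hence a maximiser $\xi^\star$ exists. (ii) \emph{Atomisation}: partition $\SX$ into finitely many cells $P_1,\dots,P_N$ of diameter $<\Delta$, so that $k\equiv 0$ on each $P_j\times P_j$. Replacing the diffuse mass of $\xi^\star$ in a cell $P_j$ by a single atom of equal mass at a point $x_j\in P_j$ alters $\phi_1$ only through the cross-interaction with the rest of the measure (the blob's self-interaction is null, $k$ vanishing on $P_j\times P_j$, so the full degree $g(x)=\int_\SX k(x,x')\,d\xi^\star(x')$ may be used on $P_j$), by $2\int_{P_j}[g(x_j)-g(x)]\,d\xi^\star_c(x)$, where $\xi^\star_c$ is the diffuse part; by Theorem~\ref{Th:NC}, $g=\phi_1(\xi^\star)$ holds $\xi^\star$-almost everywhere, so choosing $x_j$ typical makes this change vanish. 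Applying it cell by cell produces a purely atomic maximiser. (iii) \emph{Merging}: if a maximiser $\hat\xi$ has two atoms $a,b$ with $\|a-b\|\ne\Delta$, then along $\hat\xi+\epsilon(\delta_a-\delta_b)$ the quadratic coefficient $-2k(a,b)$ vanishes while the linear coefficient $2[g(a)-g(b)]$ vanishes by Theorem~\ref{Th:NC}, so $\phi_1$ is constant along this line and all the mass of $b$ may be shifted onto $a$, removing one atom. Iterating leaves a maximiser supported on an equilateral set, where the Cauchy--Schwarz bound of the previous paragraph applies and yields $\phi_1\le d/(d+1)$.

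The main obstacle is precisely steps (ii)--(iii): because $\phi_1$ is neither concave nor convex, global optimality cannot be extracted from first-order information alone, and I must lean on the \emph{exact} equality in the necessary condition of Theorem~\ref{Th:NC} to annihilate the linear terms in both perturbations, while the diffuse part of the measure is controlled by the diameter-$<\Delta$ partition (which makes the self-interaction terms disappear) together with the upper semicontinuity used for existence. By contrast the equilateral-set bound, the Cauchy--Schwarz estimate on a clique, and the evaluation at $\xi^*$ are routine.
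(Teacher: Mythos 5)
Your proof is correct, and it takes a genuinely different route from the paper's. The paper's own proof reduces (rather informally) to finitely supported atomic measures, encodes the distance-$\Delta$ pairs as a graph $\SG(\xi)$, and then cites Theorem~1 of Motzkin and Straus: the maximum of $\sum_{(i,j)\in\SG(\xi)}w_iw_j$ over the weight simplex equals $1-1/\omega$ with $\omega$ the clique number, attained by uniform weights on a maximum clique; the equilateral-set bound $\omega\le d+1$ is then simply asserted. You instead re-derive the Motzkin--Straus mechanism directly in the measure setting: existence of a maximiser via upper semicontinuity of $\psi_*$ and weak-$*$ compactness, then two mass-transport reductions (collapsing the diffuse mass of each small cell onto a typical point, and merging pairs of atoms at distance $\ne\Delta$) whose first-order terms are annihilated by the exact-equality part of Theorem~\ref{Th:NC} and whose second-order terms vanish because $\psi_*(0)=0$ and $\psi_*(\|a-b\|)=0$; finally Cauchy--Schwarz on a clique together with a Gram-matrix proof that an equilateral set in $\mathds{R}^d$ has at most $d+1$ points. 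What your route buys: it is self-contained (no external combinatorial theorem), it proves the equilateral bound that the paper only asserts, and it is measure-theoretically more careful than the paper --- the paper dismisses ``continuous'' components on the grounds that absolutely continuous measures give $\phi_1=0$, which overlooks singular continuous parts and their cross terms with atoms (an atom at the centre of a sphere of radius $\Delta$ carrying uniform surface mass produces a strictly positive cross term), whereas your cell-collapse argument handles an arbitrary diffuse part. What it costs: length, and the need for an existence argument. Two small repairs are advisable. First, Theorem~\ref{Th:NC} is stated for $\psi=|K_1-K_0|^p$, not for the idealised $\psi_*$; its proof (first-order stationarity for a quadratic functional with a bounded symmetric kernel) carries over verbatim, but you should say so explicitly. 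Second, since your atomisation collapses only the \emph{diffuse} mass, the resulting maximiser may still carry countably many atoms, and your merging step then need not terminate in finitely many steps; the fix is to collapse \emph{all} the mass in each cell (the identical computation applies, because $\psi_*$ vanishes on $P_j\times P_j$ regardless of whether that mass is atomic or diffuse), which yields a finitely supported maximiser in one pass.
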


\begin{proof}
Since $\phi_1(\xi)=0$ when $\xi$ is continuous with respect to the Lebesgue measure on $\SX$, we can restrict our attention to measures without any continuous component. Assume that $\xi=\sum_{i=1}^n w_i \delta_{x_i}$, with $w_i\geq 0$ for all $i$ and $\sum_{i=1}^n w_i=1$, $n\in\mathds{N}$. 
Consider the graph $\SG(\xi)$ having the $x_i$ as vertices, with an edge $(i,j)$ connecting $x_i$ and $x_j$ if and only if $\|x_i-x_j\| = \Delta$. We have 
\bea
\phi_1(\xi) = \sum_{(i,j)\in\SG(\xi)} w_i w_j,
\eea
and Theorem~1 of \cite{motzkin_maxima_1965} implies that $\phi_1(\xi)$ is maximum when $\xi$ is uniform on the maximal complete subgraph of $\SG(\xi)$. The maximal achievable order is $d+1$, obtained when the $x_i$ are the vertices of a regular simplex in $\SX$ with edge length $\Delta$. \cite{motzkin_maxima_1965} also indicate in their Theorem~1 that $\phi_1(\xi^*)=1-1/(d+1)$. This is easily recovered knowing that $\SG(\xi^*)$ is fully connected with order $d+1$. Indeed, we then have
\bea
\phi_1(\xi)=\sum_{i=1}^{d+1} w_i \sum_{\stackrel{j=1}{j\neq i}}^{d+1} w_j = \sum_{\stackrel{i,j=1}{j\neq i}}^{d+1} w_iw_j = 1-\sum_{i=1}^{d+1} w_i^2,
\eea
which is maximum when all $w_i$ equal $1/(d+1)$.
\end{proof}

\subsection{Optimal designs for  $\psi(t)=|\Psi_1(t)-\Psi_0(t)|$}
The optimal designs of Theorem~\ref{Th:optimal-simplified} are natural candidates for being optimal when we return to the case of interest $\psi(t)=|\Psi_1(t)-\Psi_0(t)|$. In the light of Theorem~\ref{Th:NC}, for a given probability measure $\xi$ on $\SX$, we consider the function
\bea
\delta_\xi(x)=\int_\SX \psi^p(\|x-x'\|)\, \dd\xi(x') - \phi_p(\xi),
\eea
which must satisfy $\delta_\xi(x)\leq 0$ for all $x\in\SX$ when $\xi$ is optimal. For an optimal measure $\xi^*$ as in Theorem~\ref{Th:optimal-simplified}, with support $x_1,\ldots,x_{d+1}$ forming a regular $d$-simplex, we have
\bea
\delta_{\xi^*}(x)= \frac{1}{d+1} \, \left[\sum_{i=1}^{d+1} \psi^p(\|x-x_i\|) - d \right].
\eea
One can readily check that $\delta_{\xi^*}(x_i)=0$ for all $i$ (as $\psi(\|x_i-x_j\|)=\psi(\Delta)=1$ for $i\neq j$ and $\psi(0)=0$). Moreover, {since} $\psi(\cdot)$ is differentiable everywhere except possibly at zero, when {$p>1$
the gradient of $\delta_{\xi^*}(x)$ equals zero at each $x_i$.} However, these $d+1$ stationary points may sometimes correspond to local minima --- a situation when of course $\xi^*$ is not optimal. The left panel of Figure~\ref{F:delta_x} shows an illustration ($d=2$) for $p=1.5$, $K_0(x,x')=\exp(-\|x-x'\|)$ and $K_1$ being the Mat\'ern 5/2 kernel $K_{1,1}$. The measure $\xi^*$ is supported at the vertices of the equilateral triangle $(0,0),(\Delta,0),(\Delta/2,\sqrt(3)\Delta/2)$ (indicated in blue on the figure), with $\Delta\simeq  0.53$ (the value where $\psi(\cdot)$ is maximum). Here the $x_i$ correspond to local minima of $\delta_{\xi^*}(x)$, {$\psi(\cdot)$ is not differentiable at zero but $p>1$ so that $\delta_{\xi^*}(\cdot)$ is differentiable.}

When $p\to\infty$, $\psi^p(\cdot)$ approaches the (discontinuous) function $\psi_*(\cdot)$, suggesting that $\xi^*$ may become close to being optimal for $\phi_p$ when $p$ is large enough. However, when $\SX$ is large, $\xi^*$ is never truly optimal, no matter how large $p$ is. Indeed, suppose that $\SX$ contains a point $x_*$ corresponding to the symmetric of a vertex $x_k$ of the simplex defining the support of $\xi^*$ with respect to the opposite face of that simplex.
Direct calculation gives
\bea
L=\|x_k-x_*\|=2\,\Delta\,\left(\frac{d+1}{2\,d}\right)^{1/2}.
\eea
The right panel of Figure~\ref{F:delta_x} shows an illustration for $K_0$ and $K_1$ being the Mat\'ern 3/2 and Mat\'ern 5/2 kernels $K_{0,1}$ and $K_{1,1}$, respectively. The measure $\xi^*$ is supported at the vertices of the equilateral triangle with vertices $(0,0),(\Delta,0),(\Delta/2,\sqrt(3)\Delta/2)$ with now $\Delta\simeq 0.7$. {At the point $x_*$, symmetric to $x_k$, indicated in red on the figure,} we have
\be
\delta_{\xi^*}(x_*) &=& \frac{1}{d+1} \, \left[\sum_{\stackrel{i=1}{i\neq k}}^{d+1} \psi^p(\|x_*-x_i\|) + \psi^p(\|x_*-x_k\|) - d \right] \nonumber \\
&=& \frac{1}{d+1}\, \psi^p(L) >0 \,, \label{xi^*-not-opt}
\ee
where the second equality follows from $\|x_*-x_i\|=\Delta$ for all $i\neq k$, implying that $\xi^*$ is not optimal.
Another, more direct, proof of the non-optimality of $\xi^*$ is to consider the measure $\widehat\xi$ that sets weights $1/(d+1)$ at all $x_i\neq x_k$ and weights $1/[2(d+1)]$ at $x_k$ and its symmetric $x_*$.
Direct calculation gives
\bea
\phi_p(\widehat\xi) = \frac{d}{d+1}\,\left(1-\frac{1}{d+1}\right) + \frac{2}{2\,(d+1)}\left[\frac{d}{d+1}+\frac{1}{2\,(d+1)} \,\psi^p(L)\right].
\eea
The first term on the right-hand side comes from the $d$ vertices $x_i$, $i\neq k$, each one having weight $1/(d+1)$ and being at distance $\Delta$ of all other vertices, those having total weight $1-1/(d+1)$. 
The second term comes from the two symmetric points $x_k$ and $x_*$, each one with weight $1/[2(d+1)]$. Each of these two points is at distance $\Delta$ from $d$ vertices with weights $1/(d+1)$ and at distance $L$ of the other opposite point with weight $1/[2(d+1)]$. We get after simplification
\bea
\phi_p(\widehat\xi) = \frac{d}{d+1} + \frac{\psi^p(L)}{2\,(d+1)^2} > \phi_p(\xi^*) = \frac{d}{d+1} ,
\eea
showing that $\xi^*$ is not optimal. Note that, for symmetry reasons, {the design} $\widehat\xi$ is not optimal for large enough $\SX$. The determination of a truly optimal design seems very difficult. In the simplified problem of Section~\ref{S:simplified}, where the criterion is based on the function $\psi_*$ defined by \eqref{psi_*}, the measures $\xi^*$ and $\widehat\xi$ supported on $d+1$ and $d+2$ points, respectively, have the same criterion value
$\phi_p(\xi^*)=\phi_p(\widehat\xi)=d/(d+1)$ for all $p>0$.

Although $\xi^*$ is not optimal, since $\psi(\|x_*-x_k\|)<1$ (as $\psi(t)$ takes its maximum value 1 for $t=\Delta$), \eqref{xi^*-not-opt} suggests that $\xi^*$ may be only marginally suboptimal when $p$ is large enough. Moreover, as the right panel of Figure~\ref{F:delta_x} illustrates, a design $\xi^*$ supported on a regular simplex is optimal provided that $\SX$ is small enough and $p$ is large enough to make $\delta_{\xi^*}(x)$ concave at each $x_i$ (for symmetry reasons, we only need to check concavity at one vertex).
In fact, $p>2$ is sufficient. Indeed, assuming that $p>2$ and that $\psi(\cdot)$ is twice differentiable everywhere, with second-order derivative $\psi''(\cdot)$, except possibly at zero, direct calculation gives
\bea
\frac{d^2\delta_{\xi^*}(x)}{dxdx\TT}\bigg{|}_{x=x_1}= \frac{1}{d+1}\, \frac{p\,\psi^{p-1}(\Delta)\psi''(\Delta)}{\Delta^2} \sum_{i=2}^{d+1} (x_1-x_i)(x_1-x_i)\TT \,,
\eea
which is negative-definite (since $\psi''(\Delta)<0$, $\psi(\cdot)$ being maximal at $\Delta$). The right panel of Figure~\ref{F:delta_x} gives an illustration. Note that $p<2$ on the left panel, and the $x_i$ correspond to local minimas of $\delta_{\xi^*}(\cdot)$.
Figure~\ref{F:delta_x_2} shows a plot of $\delta_{\xi^*}(x)$ for $p=2$ and $K_0$ and $K_1$ being the Mat\'ern 3/2 and Mat\'ern 5/2 kernels $K_{0,1}$ and $K_{1,1.07}$, respectively, suggesting that the form of optimal designs may be in general quite complicated.

\begin{figure}[ht!]
\begin{center}
\includegraphics[width=.49\linewidth]{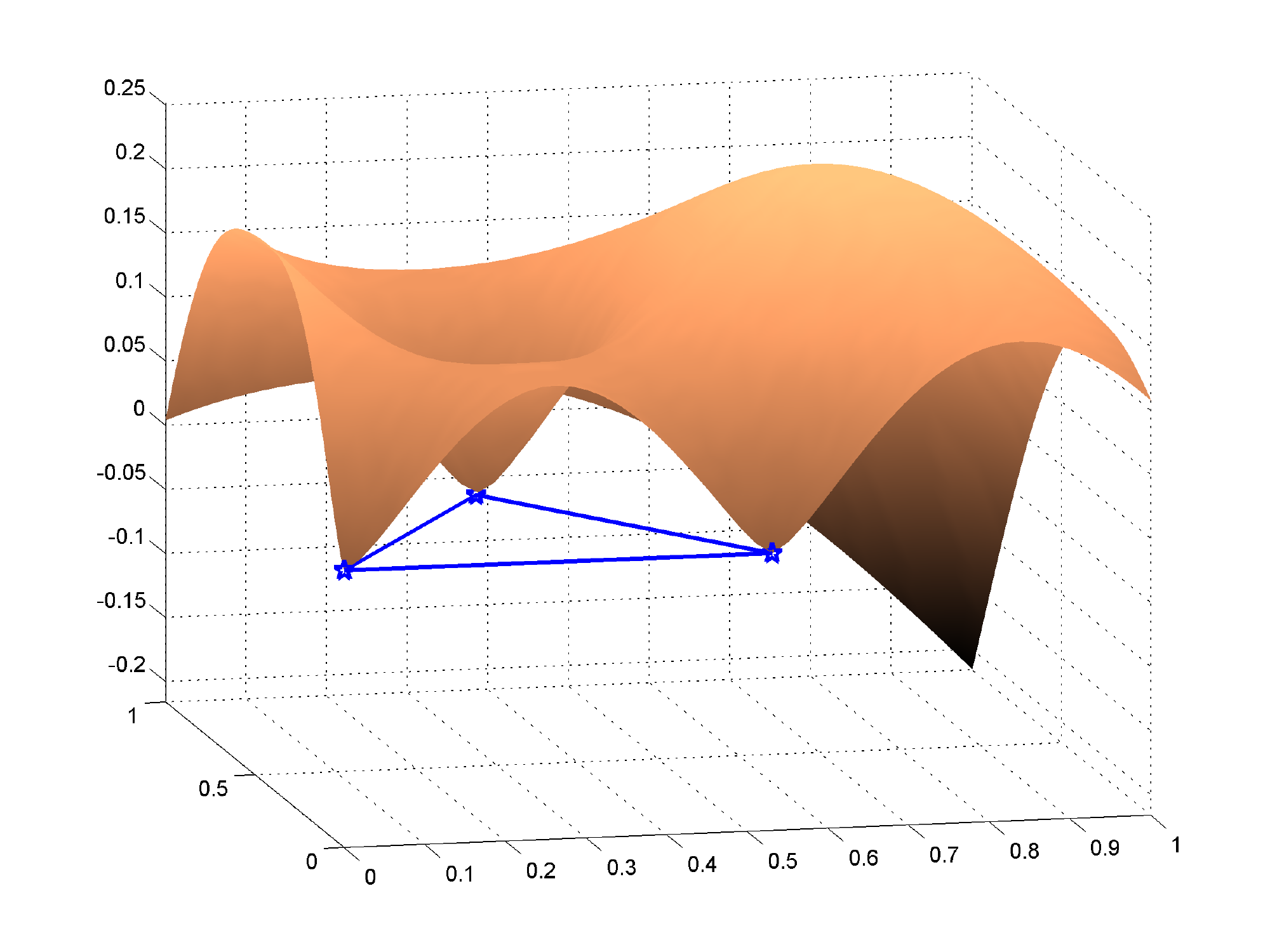}
\includegraphics[width=.49\linewidth]{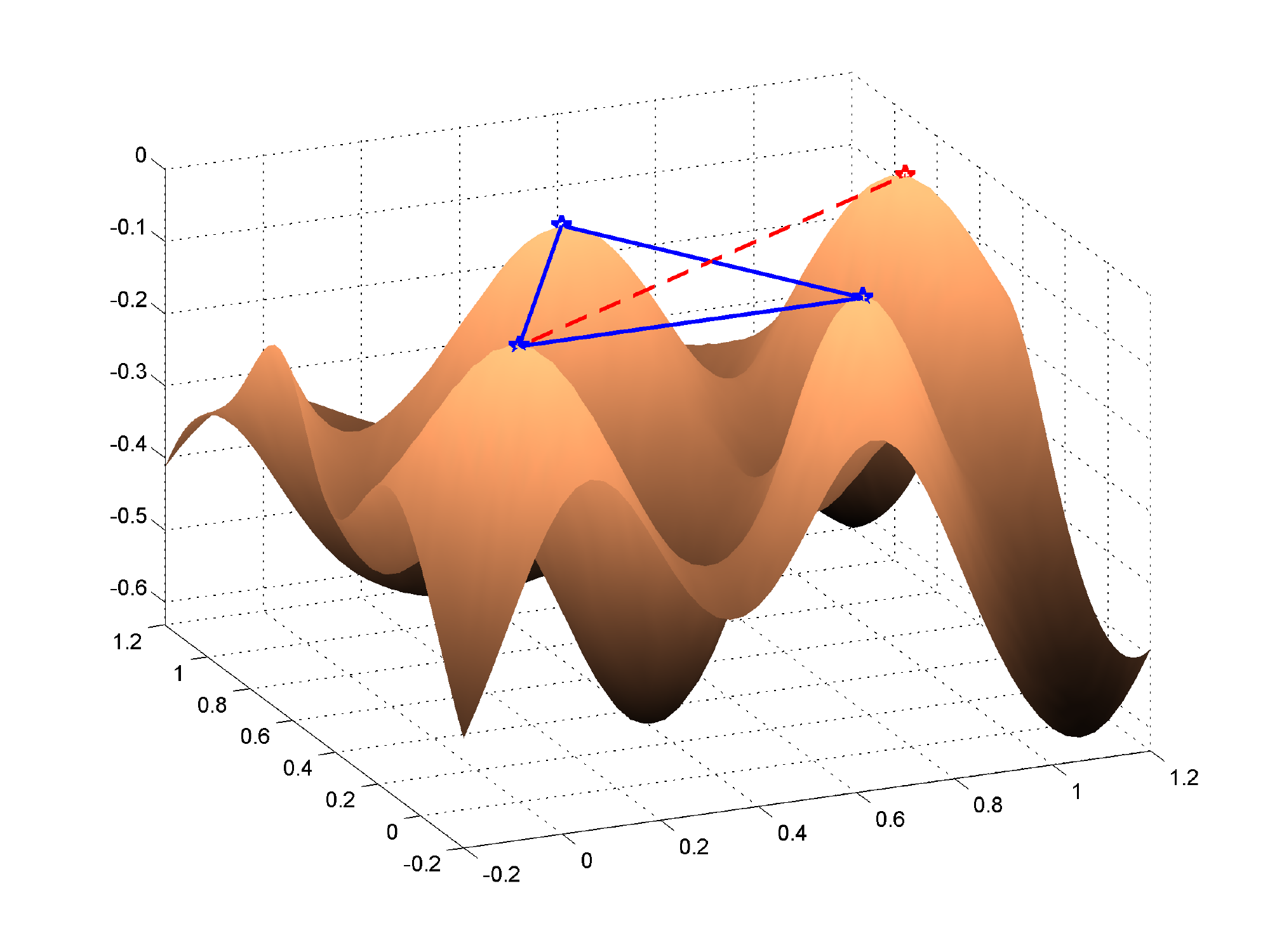}
\end{center}
\caption{\small Surface plot of $\delta_{\xi^*}(x)$ ($x\in\mathds{R}^2$), the support of $\xi^*$ corresponds to the vertices of the equilateral triangle in blue. Left: $K_0(x,x')=\exp(-\|x-x'\|)$ and $K_1=K_{1,1}$ ($\Delta\simeq 0.53$), $p=1.5$; Right: $K_0=K_{0,1}$, $K_1=K_{1,1}$ ($\Delta\simeq 0.7$), $p=10$; the red point $x_*$ is the symmetric of the origin $(0,0)$ with respect to the opposite side of the triangle.}\label{F:delta_x}
\end{figure}

\begin{figure}[ht!]
\begin{center}
\includegraphics[width=.49\linewidth]{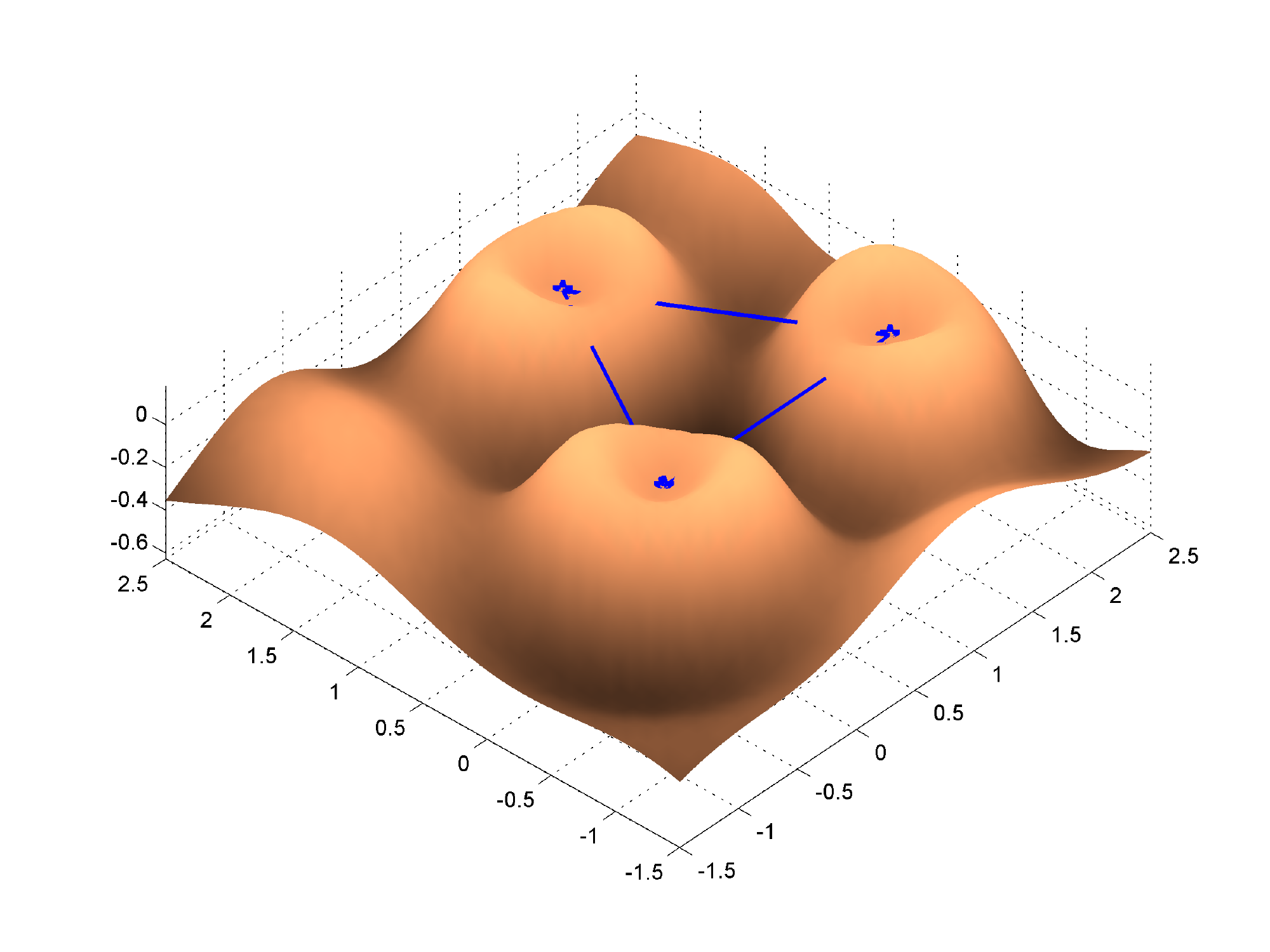}
\end{center}
\caption{\small Surface plot of $\delta_{\xi^*}(x)$ ($x\in\mathds{R}^2$), the support of $\xi^*$ corresponds to the vertices of the equilateral triangle in blue: $K_0=K_{0,1}$, $K_1=K_{1,1.07}$ ($\Delta\simeq 1.92$), $p=2$.}\label{F:delta_x_2}
\end{figure}

\section{A numerical example}\label{S:Examples}
\subsection{Exact designs}\label{Subsec:exact_designs}
In this section, we consider numerical evaluations of designs resulting from the prediction-based and distance-based criteria. {Here}, the rival models are {the isotropic versions of the covariance kernels used in Example 3 (Section \ref{sec:incr_uncond}) for the design space $\SX=[0,10]^2$, discretised at $n=25$ equally spaced points in each dimension.} For an agreement on the setting of correlation lengths in both kernels, we have applied a minimisation procedure. Specifically, {we have taken $\theta=\theta_0=1$ in $K_{0,\theta}(x,x')$ and adjusted the parameter in the second kernel minimising each of the distance-based criteria for the design $\Xb_{625}$ corresponding to the full grid. This resulted in $\theta_1= 1.0047$, $1.0285$, $1.0955$ and $1.3403$, respectively, for $\Phi_F,\Phi_1,\Phi_2$ and $\Phi_{KL}$.} We have finally chosen $\theta_1=1.07$, which seems to be compatible with the above values.

The left panel in Figure \ref{plot.maternkernels-Newparsets-2dim} shows the plot of the two Mat\'ern covariance functions at the assumed parameter values. This plot illustrates the similarity of the kernels which we aim to discriminate. The right panel in the figure refers to the plot of the absolute difference between the covariance kernels. The red line corresponds to the distance where the absolute difference between them is maximal. This is denoted by $\Delta$, which is equal to $\Delta=1.92$ in this case.

\begin{figure}[htp]
	\begin{center}
		\includegraphics[width=.45\linewidth]{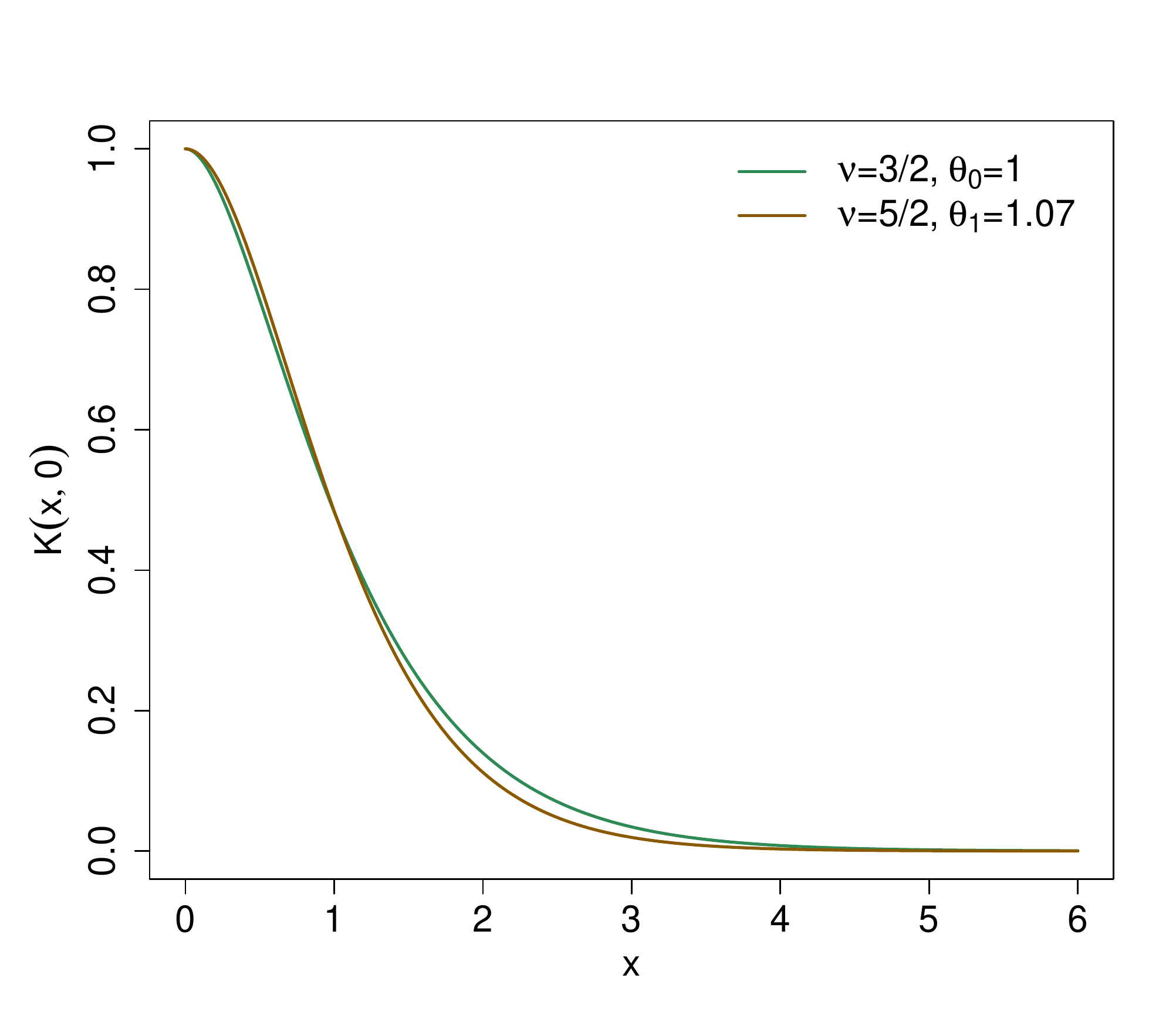}
		\includegraphics[width=.45\linewidth]{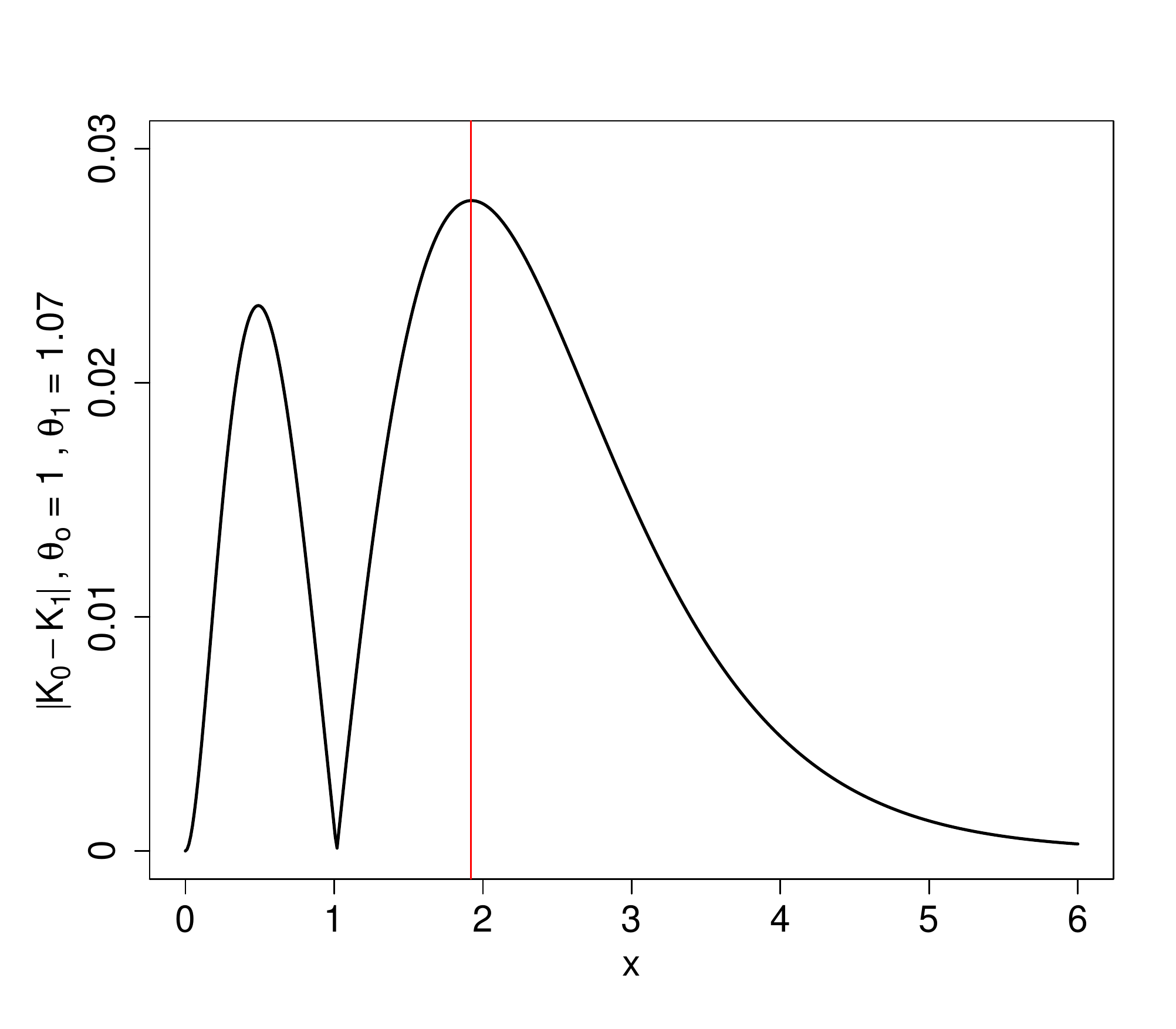}
		
	\end{center}
	\caption{{\footnotesize Left: Plot of the Mat\'ern covariance functions at the assumed parameter setting.
			Right: $\psi(t)=|K_{0,\theta_0}(t,0)-K_{0,\theta_1}(t,0)|, (\theta_0=1,\theta_1=1.07)$. 
	}}
	\label{plot.maternkernels-Newparsets-2dim}
\end{figure}

The sequential approach is the only case where the observations $\Yb_n$ corresponding to the previous design points $\Xb_n$ are used in the design construction. 
We use this information to estimate the parameter setting at each step. The (box)plots of the maximum likelihood (ML) estimates {$\hat{\theta}_0$ and $\hat{\theta}_1$ of the inverse correlation lengths $\theta_0$ and $\theta_1$} of $K_{0,\mt}(x,x')$ and $K_{1,\mt}(x,x')$, respectively, are presented in Figure \ref{Plot:MLest_theta}. This refers to the case where the first kernel, Mat\'ern 3/2, is the data generator. The $\hat{\theta}_0$ estimates converge to their null value, $\theta_0=1$, drawn as a red dashed line in the left panel of Figure \ref{Plot:MLest_theta}, {as expected due to the consistency of the ML estimator in this case}. For the second kernel to be similar to the first one (i.e., less smooth), the $\hat{\theta}_1$ estimates have increased (see the right panel). The {decrease} of the correlation length causes the covariance kernel to drop faster as a function of distance. We defer from presenting the opposite case (where the Mat\'ern 5/2 is the data generator), which is similar.

\begin{figure}[htp]	
	\begin{center}		
			\includegraphics[width=.45\linewidth]{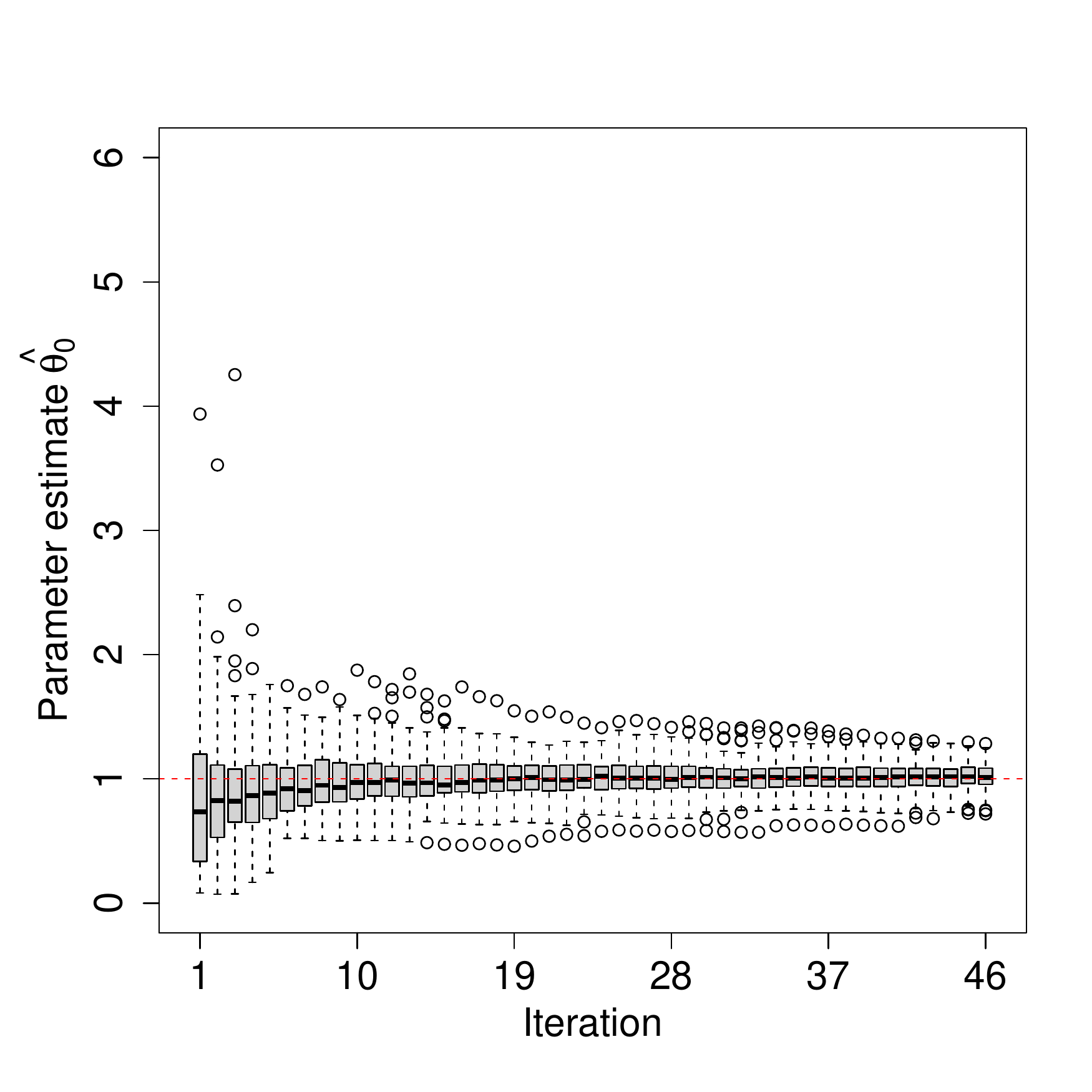}
			\includegraphics[width=.45\linewidth]{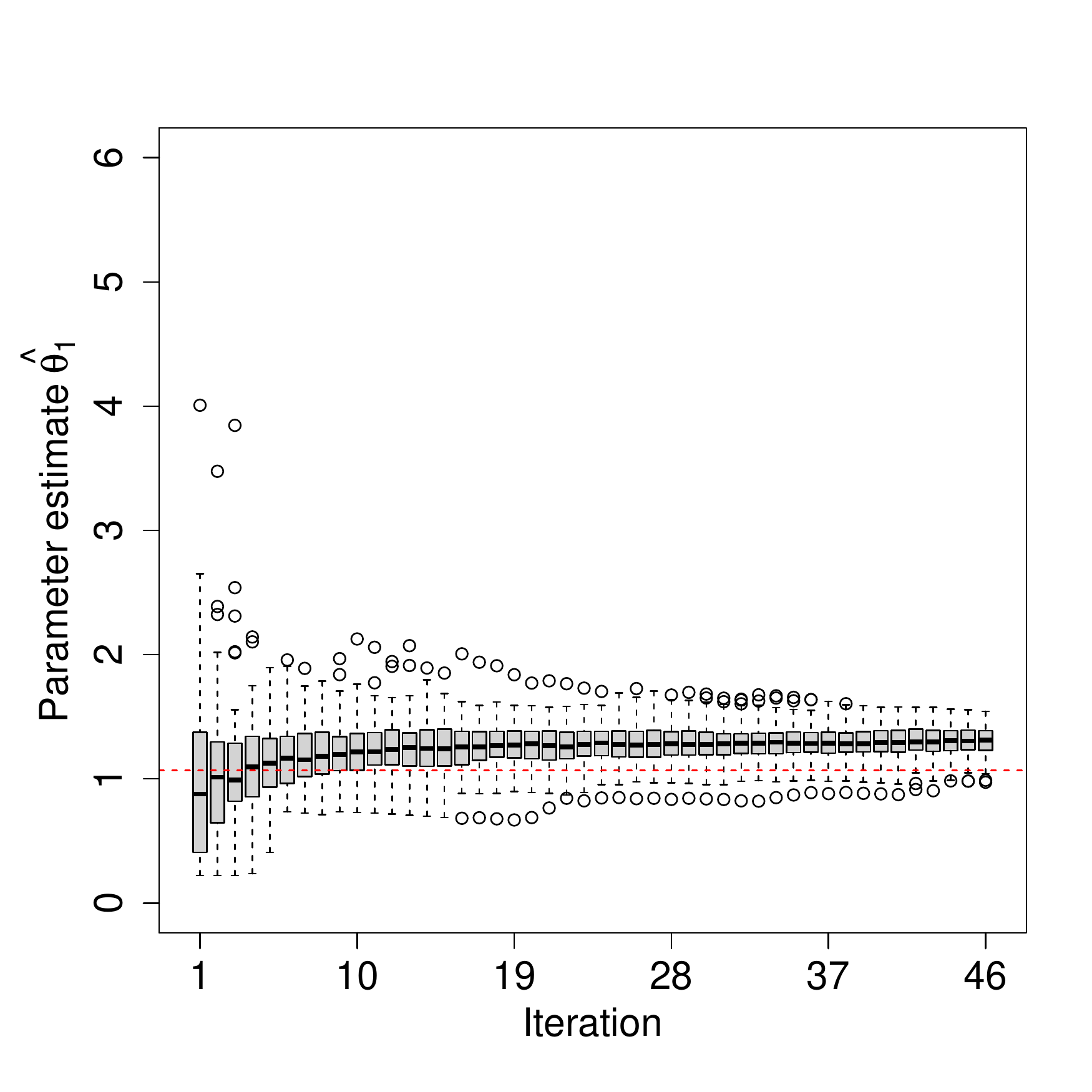}
	\end{center}
	\caption{Maximum likelihood estimates of the correlation lengths in Mat\'ern kernels. 
	}
	\label{Plot:MLest_theta}
\end{figure}

Apart from the {methods applied in Section \ref{S:distance-based-D}, we have considered some other static approaches for discrimination. $D_s$-optimal design is a natural candidate that} can be applied in the distance-based fashion. For $D_s$-optimality, we require the general form of the Mat\'ern covariance kernel, which is based on the modified Bessel function of the second kind (denoted by $C_{\nu}$). It is given by
\be \label{Besself_sec}
\Kb_{\nu}(r)=\dfrac{2^{1-\nu}}{\Gamma(\nu)}\left( \sqrt{2\nu} ~r\theta\right)^{\nu} C_{\nu}\left( \sqrt{2\nu} ~r\theta\right).
\ee
Smoothness, $\nu$, is considered as the parameter of interest, while the correlation length $\theta$ is assumed as nuisance. The first off-diagonal element in the $2\times 2$ information matrix, associated with the estimation of parameters $\thetav=(\theta,\nu)$, is
\be
M(\Xb_n,\thetav)_{12}=\dfrac{1}{2}\tr\left\lbrace \Kb_{\nu}^{-1}\dfrac{\partial \Kb_{\nu}}{\partial \theta} \Kb_{\nu}^{-1}\dfrac{\partial \Kb_{\nu}}{\partial \nu} \right\rbrace,
\ee
see, e.g., Eq.~(6.19) in \cite{muller2007collecting}. The other elements in the information matrix are calculated similarly. We have used the supplementary material of \cite{lee2018abc} to compute the partial derivatives of the Mat\'ern covariance kernel. Finally, the $D_s$-criterion is
\be \label{eq:D_scrit-spatial}
\Phi_{D_s }= |M(\Xb_n,\thetav)|/ |M(\Xb_n,\thetav)_{11}|,
\ee
where $M(\Xb_n,\thetav)_{11}$ is the element of the information matrix corresponding to the nuisance parameter (i.e., in $M(\Xb_n,\thetav)_{11}$ both partial derivatives are calculated with respect to $\theta$). 
In the examples to follow we consider local $D_s$-optimal design; that is, the parameters $\theta$ and $\nu$ are set at given values.

From a Bayesian perspective, models can be discriminated optimally when the difference between the expected entropies of the prior and the posterior model probabilities is maximised. This criterion underlies a famous sequential procedure put forward by \cite{box_discrimination_1967} and \cite{hill_note_1969}. Since such criteria typically cannot be computed analytically, several bounds were derived. The upper bound proposed by \cite{box_discrimination_1967} is equivalent to the symmetric Kullback-Leibler divergence $\Phi_{KL}$. \cite{hoffmann_numerical_2017} derives a lower bound based on a lower bound for the Kullback-Leibler divergence between a mixture of two normals, which is given by Eq.~\eqref{eq:BayesianLowerB_1} and is denoted by $\Phi_{\Gamma}$. Here, we assume equal prior probabilities. A more detailed account of Bayesian design criteria and their bounds is given in Appendix~\ref{sec:BayesianLowerUpperb}.

\begin{table}[htp]
	\caption{Comparison of average hit rates in different methods for the first numerical example.
}
	\label{table:com.all-hitrates-2dimMaterns}
	\hfill{\renewcommand{\arraystretch}{1.4}
		\begin{center}
			\resizebox{\textwidth}{!}{
				\begin{tabular}{lcccccccccc}
					\hline
					&\multicolumn{10}{c}{ Average hit rate }\\
					\cline{2-11}
					
					\text{\textbf{Design size}}&$5$&$ 6$&$ 7$&$8$&$9$&$10$&$20$&$30$&$40$&$ 50$\\
					\hline
					\text{\textbf{Sequential \eqref{SeqCond}}}&$0.500$&$ 0.535$&$ 0.540$&$ 0.595$&$ 0.570$&$ 0.640$&$ 0.695$&$ 0.715$&$ 0.740$&$ 0.770$\\
					\hline
					$\phi_A$&$ 0.505$&$ 0.500$&$ 0.530$&$ 0.525$&$ 0.505$&$ 0.510$&$ 0.520$&$ 0.535$&$ 0.585$&$ 0.635$\\
					\hline
					$\phi_B$&$0.520$&$ 0.545$&$ 0.575$&$ 0.585$&$ 0.615$&$ 0.650$&$ 0.785$&$ 0.875$&$ 0.900$&$ 0.910$\\
					\hline
					$\phi_{KL}$&$0.520$&$ 0.545$&$ 0.575$&$ 0.585$&$ 0.615$&$ 0.650$&$ 0.785$&$ 0.870$&$ 0.915$&$ 0.925$\\
					\hline
					$\Phi_F$&$0.580$&$\textbf{ 0.625}$&$ \textbf{0.620}$&$0.625$&$0.670$&$ \textbf{0.715}$&$0.795$&$ \textbf{0.900}$&$0.925$&$ 0.950$\\
					\hline
					$\Phi_1$&$0.525$&$ 0.520$&$ 0.555$&$ 0.540$&$ 0.550$&$ 0.610$&$ 0.725$&$ 0.890$&$ 0.910$&$ 0.920$\\
					\hline
					$\Phi_2$&$0.525$&$ 0.520$&$ 0.555$&$ 0.540$&$ 0.550$&$ 0.610$&$ 0.715$&$ 0.860$&$ 0.890$&$ 0.910$\\
					\hline
					$\Phi_{KL}$&$0.580$&$ \textbf{0.625}$&$ \textbf{0.620}$&$0.625$&$0.670$&$ \textbf{0.715}$&$ 0.795$&$ 0.895$&$0.925$&$ \textbf{0.955}$\\
					\hline
					$\Phi_{\Gamma}$&$\textbf{0.595}$&$ \textbf{0.625}$&$ 0.610$&$ \textbf{0.645}$&$ \textbf{0.675}$&$ 0.700$&$0.795$&$ 0.895$&$ \textbf{0.935}$&$ 0.940$\\
					\hline
					$\Phi_{D_s}$&$0.540$&$  0.575$&$  0.590$&$  0.620$&$  0.650$&$  0.675$&$  \textbf{0.805}$&$  0.850$&$  0.855$&$ 0.925$\\

				\end{tabular}
			}
		\end{center}
	}
\end{table}

Table \ref{table:com.all-hitrates-2dimMaterns} collects simulation results for the given example. We have included the sequential procedure \eqref{SeqCond} {as a benchmark} for orientation.
For all other {approaches} the true parameter values are used in the covariance kernels.
Concerning static (distance-based) designs based on maximisation of $\Phi_F,\Phi_1,\Phi_2,\Phi_{KL},\Phi_\Gamma,\Phi_{D_s}$, for each design size considered we first built a an incremental design and then used a classical exchange-type algorithm to improve it. These designs are thus not necessarily nested, i.e., $\Xb_n\not\subset\Xb_{n'}$ for $n<n'$.

Each design of size $n$ was then evaluated by generating $N=100$ independent sets of $n$ observations generated with the assumed true model, evaluating the likelihood functions for these sets of observations for both models, and then deciding for each set of observations which model has the higher likelihood value. The hit rate is the fraction of sets of observations where the assumed true model has the higher likelihood value. The procedure was repeated by assuming the other model to be the true one. The two hit rates are then averaged and stated in Table~\ref{table:com.all-hitrates-2dimMaterns}, which contains the results for all the criteria and design sizes we considered. For the special case of the sequential construction \eqref{SeqCond}, the design path depends on the observations generated at the previously selected design points; that is, unlike for the other criteria, for a given design size $n$ each random run produces a different design. To compute the hit rates for a particular $n$ we used $N=100$ independent runs of the experiment.

The hit rates reported in Table \ref{table:com.all-hitrates-2dimMaterns} reflect the discriminatory power of {the} corresponding designs. One can observe that $\Phi_F$ and as expected $\Phi_{KL}$ are outperforming in terms of hit rates. The Bayesian lower bound criterion $\Phi_{\Gamma}$ is similar to the symmetric $\Phi_{KL}$. The sequential design strategy (\ref{SeqCond}) does not behave as well as the outperforming ones. It is, however, the realistic scenario that one might consider in applications as it does not assume knowledge of the kernel parameters. The effect of this knowledge can thus be {partially} calibrated for by comparing the first line against the other criteria.

\subsection{Optimal design measure {for $\phi_p$}}\label{S:6.1.1}

Theorem \ref{Theorem:necessaryCon_Phip} also allows the use of approximate designs as it presents a necessary condition for optimality of the family of criteria $\phi_p,~p>0$. This is more extensively discussed in the previous section. Here we present the numerical results for two specific cases of $p=2$ and $p=10$.
To reach a design which might be numerically optimal (or at least nearly optimal), we have applied the Fedorov-Wynn algorithm \citep{fedorov1971design,wynn1970sequential} on a dense regular grid of candidate points.

Numerical results show that for very small $p$ (e.g., $p=1$) explicit optimal measures are hard to derive. The left panel in Figure \ref{plot.2dim-DisBased-Phi2,Phi10-measure} presents the measure $\xi_2^*$ obtained for $\phi_{2}$. To construct $\xi_2^*$, we have first calculated an optimal design on a dense grid by applying $1000$ iterations of the Fedorov-Wynn algorithm (see the comment following Theorem~\ref{Theorem:necessaryCon_Phip}); the design measure obtained is supported on 9 grid points. We then applied a continuous optimisation algorithm (library NLopt \citep{nlopt} through its R-interface {\tt nloptr}) initialised at this $9$-point design. The 9 support points of the resulting design measure $\xi_2^*$ are independent of the grid size; they receive unequal weights, proportional to the disk areas on Figure~\ref{plot.2dim-DisBased-Phi2,Phi10-measure}-left. Any translation or rotation of $\xi_2^*$ yields the same value of $\phi_2$.

As the order $p$ increases, we eventually reach an optimal measure with only three support points and equal weights. The right panel in Figure \ref{plot.2dim-DisBased-Phi2,Phi10-measure} corresponds to the optimal design measure computed for $\phi_{10}$. This has, similarly as before, resulted from application of a continuous optimisation initialised at an optimal 3-point design calculated with the Fedorov-Wynn algorithm on a grid. This optimal design measure $\xi_{10}^*$ has three support points, drawn as blue dots, with equal weights 1/3 represented by the areas of the red disks. The blue line segments between every two locations have length  $\Delta\simeq1.92$, reflecting the ideal interpoint distance (see the right panel of Figure \ref{plot.maternkernels-Newparsets-2dim}), in agreement with corresponding discussions in Section~\ref{odm}.
Also here the optimal designs are rotationally and translationally invariant, and thus any design of such type is optimal as long as the design region is large enough to fit it.

\begin{figure}[htp]
	\begin{center}
		\includegraphics[width=.45\linewidth]{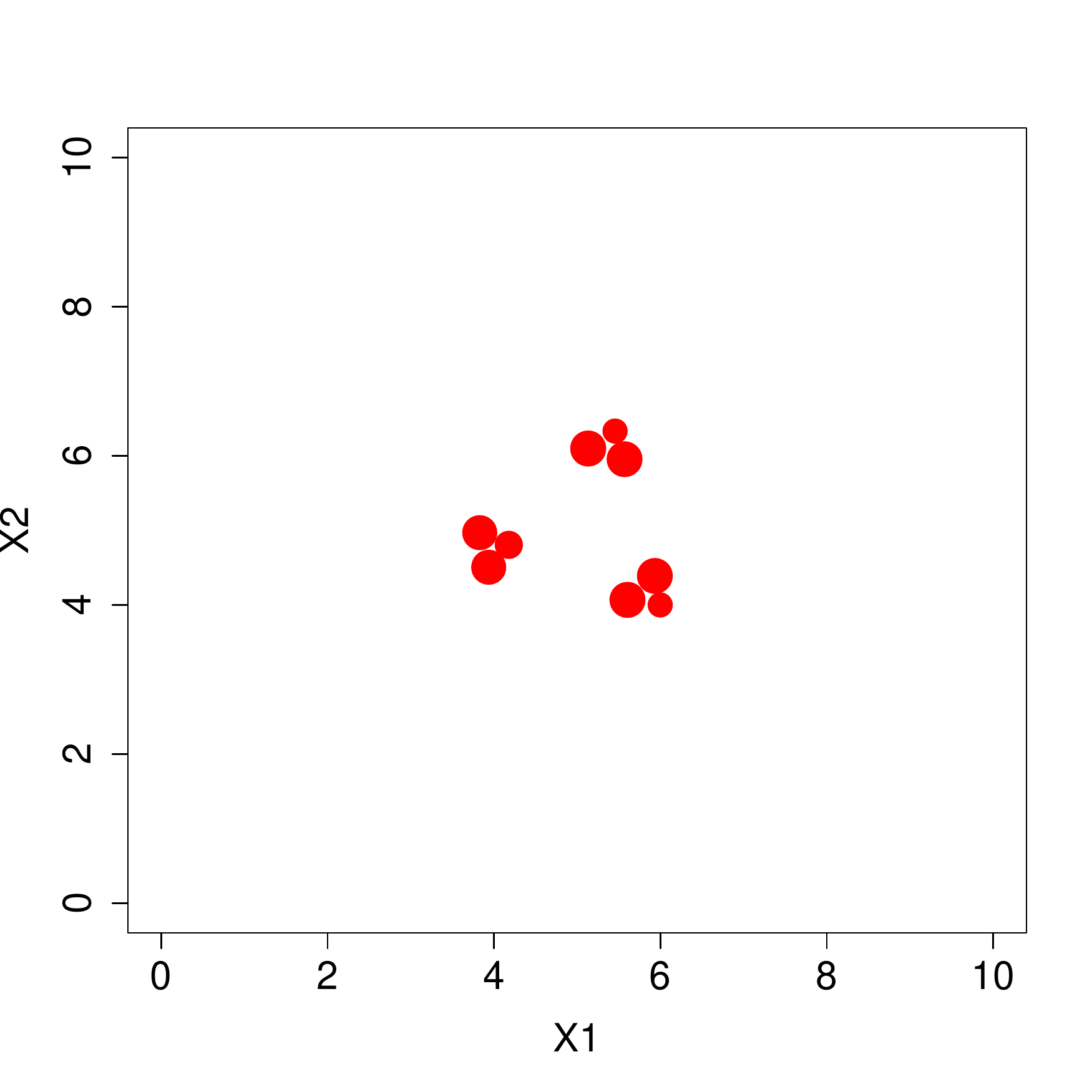}
		\includegraphics[width=.45\linewidth]{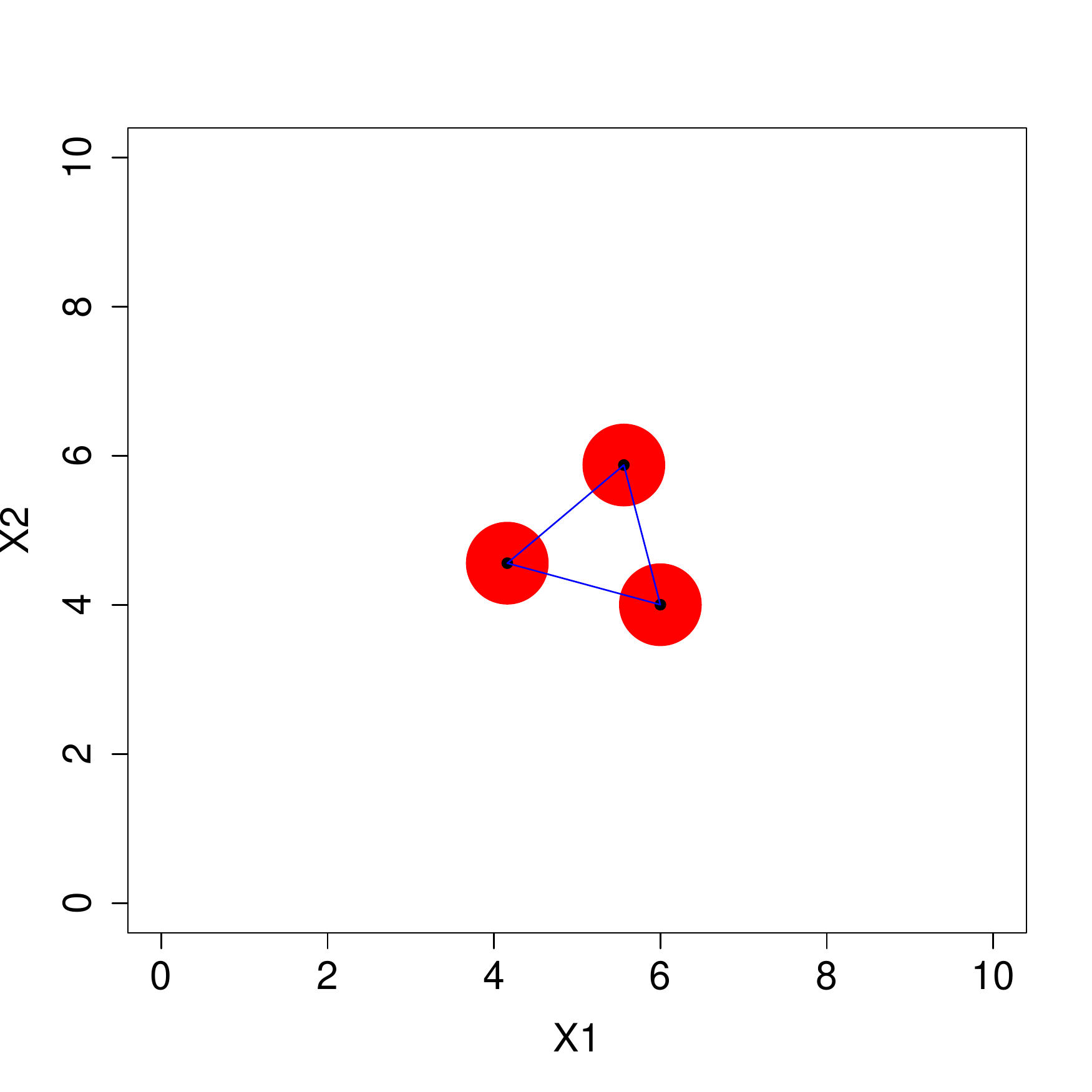}
	\end{center}
	\caption{{\footnotesize Left: The optimal measure for $\phi_{2}$. Right: The optimal measure for $\phi_{10}$.}}
	\label{plot.2dim-DisBased-Phi2,Phi10-measure}
\end{figure}

\section{Conclusions}

In this paper we have considered the design problem for the discrimination of Gaussian process regression models. This problem differs considerably from the well-treated one in standard regression models and thus offers a multitude of challenges. While the KL-divergence is a straightforward criterion, it comes with the price of being computationally demanding and lacking convenient simplifications such as design measures. We have therefore introduced a family of criteria that allow such a simplification at least in special cases and have investigated its properties. We have also compared the performance of these and other potential criteria on several examples and see that KL-divergence can be effectively replaced by simpler criteria without much loss in efficiency. In particular designs based on the Fr\'echet-distance between covariance kernels seem to be competitive. Results from the approximate design computations indicate that for classical isotropic kernels, designs with $d+1$ support points placed at the vertices of a simplex of suitable size are optimal for distance-based criteria $\phi_p$ when $p$ is large enough.

\section*{Acknowledgments}
This work was partly supported by project INDEX (INcremental Design of EXperiments) ANR-18-CE91-0007 of the French National Research Agency (ANR) and  I3903-N32 of the Austrian Science Fund (FWF).


\bibliographystyle{apalike}
\bibliography{Discrimination}

\appendix
\section*{Appendix}
\section{Notes on Box-Hill-Hunter Bayesian criteria for model discrimination between Gaussian random fields} \label{sec:BayesianLowerUpperb}

Chapter~5 of \cite{hoffmann_numerical_2017} contains an overview of Bayesian design criteria for model discrimination and some useful bounds on them.
We assume there are $M$ models $m_0,\ldots,m_{M-1}$. The most common Bayesian design criterion for model discrimination has the following form:
\begin{equation}
	\Phi_{\Lambda}(\Xb_k) = -\sum_{i=0}^{M-1} p(m_i) \log (p(m_i))  + \int_{\Yb_k \in \mathcal{Y}} p(\Yb_k) \sum_{i=0}^{M-1} p(m_i | \Yb_k) \log(p(m_i|\Yb_k)) \: \dd \Yb_k, \label{eq:Lambda1}
\end{equation}
where the data $\Yb_k = (Y_1(x_1),\ldots,Y_k(x_k))\TT$ are observed at the design $\Xb_k = (x_1,\ldots,x_k)$, $p(m_i)$ denotes the prior and $p(m_i|\Yb_k)$ the posterior model probability of model $m_i$ and $p(\Yb_k)$ is the marginal distribution of $\Yb_k$ with respect to the models. 
Hence, this criterion is the (expected) difference of the model entropy and the conditional model entropy (conditional on the observations). The posterior model probability $p(m_i|\Yb_k)$ is defined by
\begin{equation*}
	p(m_i|\Yb_k) \propto p(\Yb_k|m_i) p(m_i),
\end{equation*}
where $p(\Yb_k|m_i)$ is the likelihood of model $m_i$ (marginalised over the parameters), and $p(\Yb_k)$ is given by
\begin{equation*}
 	p(\Yb_k) = \sum_{i=0}^{M-1}  p(\Yb_k|m_i) p(m_i).
\end{equation*}
The first term in \eqref{eq:Lambda1} does not depend on the design and can therefore be ignored.

A common alternative formulation of criterion~\eqref{eq:Lambda1} is the one adopted by \cite{box_discrimination_1967} and \cite{hill_note_1969}, which will henceforth be called Box-Hill-Hunter (BHH)
criterion:
\begin{equation}
\Phi_{\Lambda}(\Xb_k) = \sum_{i=0}^{M-1} p(m_i) \int_{\Yb_k \in \mathcal{Y}}  p(\Yb_k|m_i) \log\left( \frac{p(\Yb_k|m_i)}{p(\Yb_k)} \right) \: \dd \Yb_k. \label{eq:Lambda2}
\end{equation}

In our case, if we assume point priors for the kernel parameters, we have
\begin{equation*}
	p(\Yb_k|m_i) = \varphi(\Yb_k | \etab_{k,i}, \Kb_{k,i}),
\end{equation*}
where $\etab_{k,i} = (\eta_{1,i}(x_1),\ldots,\eta_{k,i}(x_k))\TT$ is the mean vector of model $i$ at design $\Xb_k$, $\Kb_{k,i}$ is the $k \times k$ kernel matrix of model $i$ with elements given by $\{\Kb_{k,i}\}_{j,l} = K_i(x_j, x_l)$, and $\varphi(\cdot|\etab,\Kb)$ is the normal pdf with mean vector $\etab$ and variance-covariance matrix $\Kb$.

For example, for a static design involving $n$ design points, we set $k = n$ and assume that $\etab_{n,i} = \mathbf{0}$ for each design $\Xb_n$. The model probabilities $p(m_i)$ would just be the prior model probabilities before having collected any observations.

In a sequential design setting, where $n$ observations $\Yb_n$ have already been observed at locations $\Xb_n$ and we want to find the optimal design point $x$ where to collect our next observation, we have $k=1$ and set $\etab_{k,i}$ to the conditional mean $\widehat{\eta}_{n,i}(x) = \kb_{n,i}(x)\TT \Kb_{n,i}^{-1} \Yb_n$ and $\Kb_{k,i}$ to the conditional variance $\rho_{n,i}^2(x) = K_i(x,x) - \kb_{n,i}(x)\TT \Kb_{n,i}^{-1} \, \kb_{n,i}(x)$, where $\kb_{n,i}(x)\TT = (K_i(x,x_1),\ldots,K_i(x,x_n))$, see Section~\ref{S:3.1}. The prior model probabilities would have to be set to the posterior model probabilities given the already observed data:
\begin{equation*}
	p(m_i) = p(m_i|\Yb_n) \propto \varphi(\Yb_n | \, \mathbf{0}, \Kb_{n,i}) \: p(m_i).
\end{equation*}

It follows that $p(\Yb_k)$ is a mixture of normal distributions. The criterion representations~\eqref{eq:Lambda1} and \eqref{eq:Lambda2} cannot be computed directly. 
However, several bounds have been developed for the criterion, the most famous being the classic upper bound derived by \cite{box_discrimination_1967}.

\subsection{Upper bound}

The upper bound has the following form (see also \citet[Thm.~5.2, p.~168]{hoffmann_numerical_2017}):
\begin{equation*}
	\Phi_{U}(\Xb_k) = \frac{1}{2} \sum_{i=0}^{M-1} \sum_{j=0}^{M-1} p(m_i) p(m_j) \left\{ \left\|\etab_{k,i}-\etab_{k,j}\right\|^2_{\Kb_{k,j}^{-1}} + \tr\left(\Kb_{k,i} \Kb_{k,j}^{-1}\right) - n \right\}.
\end{equation*}

For $M=2$, the formula simplifies to
\begin{eqnarray*} \label{eq:BayesianUpperB}
\Phi_{U}(\Xb_k) & = & \frac{1}{2}  p(m_0) p(m_1) \biggl\{ \left\|\etab_{k,0}-\etab_{k,1}\right\|^2_{\Kb_{k,0}^{-1}} + \left\|\etab_{k,0}-\etab_{k,1}\right\|^2_{\Kb_{k,1}^{-1}} \\
& & \qquad \qquad \qquad + \tr\left(\Kb_{k,0} \Kb_{k,1}^{-1}\right) + \tr\left(\Kb_{k,1} \Kb_{k,0}^{-1}\right) - 2n \biggr\}.
\end{eqnarray*}

This is equivalent to the symmetric Kullback-Leibler divergence that we use as the criterion {$\Phi_{KL}$  (with $p(m_0)=p(m_1)=1/2$ and $\etab_{k,0}=\etab_{k,1}=\0b$)} .

\subsection{Lower bound}

\citet[Sec.~7]{hershey_approximating_2007} derive a lower bound for the Kullback-Leibler divergence between a mixture of two normals, see also \citet[Thm.~5.4 and Cor.~5.5, pp.~173--174]{hoffmann_numerical_2017}. This result is then used by \cite{hoffmann_numerical_2017} to find a lower bound for the BHH criterion $\Phi_{\Lambda}(\Xb_k)$ \citep[Thm.~5.9, p.~178]{hoffmann_numerical_2017}.
This lower bound is given by
\begin{equation*}
	\Phi_{\Gamma}(\Xb_k) = -\sum_{i=0}^{M-1}  p(m_i) \log \left\{ \sum_{j=0}^{M-1} p(m_j) \exp\left(-\frac{1}{2} \boldsymbol{\Gamma}(\Xb_k)_{ij} \right) \right\},
\end{equation*}
where
\begin{equation*}
	\boldsymbol{\Gamma}(\Xb_k)_{ij} = \left\|\etab_{k,i}-\etab_{k,j}\right\|^2_{\Kb_{k,j}^{-1}} + \tr\left(\Kb_{k,i} \Kb_{k,j}^{-1}\right) - \log \det \left(\Kb_{k,i} \Kb_{k,j}^{-1}\right) - n.
\end{equation*}

For $M=2$, as is the relevant case for our setup we get
\begin{eqnarray} \label{eq:BayesianLowerB_1}
	\Phi_{\Gamma}(\Xb_k) & = & -p(m_0) \log \biggl\{ p(m_0) \nonumber \\
	& & \qquad \qquad \quad + p(m_1)  \exp\biggl( -\frac{1}{2} \bigl[ \left\|\etab_{k,0}-\etab_{k,1}\right\|^2_{\Kb_{k,1}^{-1}} + \tr\left(\Kb_{k,0} \Kb_{k,1}^{-1}\right) \nonumber \\
	& & \qquad \qquad \qquad \qquad \qquad \qquad - \log \det\left(\Kb_{k,0} \Kb_{k,1}^{-1}\right) -n \bigr] \biggr) \biggr\} \nonumber \\
	& & - p(m_1) \log \biggl\{ p(m_1) \nonumber\\
	& & \qquad \qquad \quad + p(m_0)  \exp\biggl( -\frac{1}{2} \bigl[ \left\|\etab_{k,0}-\etab_{k,1}\right\|^2_{\Kb_{k,0}^{-1}} + \tr\left(\Kb_{k,1} \Kb_{k,0}^{-1}\right) \nonumber \\
	& & \qquad \qquad \qquad \qquad \qquad \qquad - \log \det\left(\Kb_{k,1} \Kb_{k,0}^{-1}\right) -n \bigr] \biggr) \biggr\} 
\end{eqnarray}
where $\varphi_i(\cdot) = \varphi(\cdot|\etab_{k,i},\Kb_{k,i})$,
which we are also using to compute designs in Section~\ref{Subsec:exact_designs} (again with $p(m_0)=p(m_1)=1/2$ and $\etab_{k,0}=\etab_{k,1}=\0b$).

\end{document}